\ifdefined\SINGLECOL
\documentclass[12pt, draftclsnofoot, onecolumn]{IEEEtran}
\else
\documentclass[10pt]{IEEEtran}
\fi
\usepackage{amsmath}
\usepackage{amssymb}
\usepackage{amsthm}
\usepackage{mathtools}
\usepackage{bbm}
\usepackage{bm}
\usepackage{url}
\usepackage{caption}
\usepackage{dblfloatfix}
\usepackage{subcaption}
\usepackage{comment}
\usepackage{booktabs}
\usepackage[noadjust]{cite}
\usepackage{multirow}
\newtheorem{theorem}{Theorem}
\newtheorem{corollary}{Corollary}
\newtheorem{lemma}{Lemma}
\newtheorem{proposition}{Proposition}

\usepackage[capitalize]{cleveref}
\DeclareMathOperator*{\argmin}{arg\,min}
\DeclareMathOperator*{\argmax}{arg\,max}

\DeclareMathOperator{\E}{\mathbb{E}}
\DeclareMathOperator{\diag}{diag}
\usepackage{pgfplots}
\pgfplotsset{compat=1.15}
\pgfplotsset{colormap name=jet}
\usepgfplotslibrary{fillbetween}
\usetikzlibrary{patterns}
\usetikzlibrary{plotmarks}
\usetikzlibrary{trees}
\usepgfplotslibrary{groupplots}
\usetikzlibrary{automata,positioning,backgrounds}
\usetikzlibrary{external}
\usepackage{algorithm}
\usepackage{algpseudocode}

\begin{document}
\title{Timely Monitoring of Dynamic Sources with Observations from Multiple Wireless Sensors}

\author{Anders~E.~Kal{\o}r,~\IEEEmembership{Student Member,~IEEE,} and Petar~Popovski,~\IEEEmembership{Fellow,~IEEE}%
\thanks{The work has been supported by the Danish Council for Independent Research, Grant Nr. 8022-00284B SEMIOTIC and by the Digital Technologies for Industry 4.0 project at the TECH faculty, Aalborg University.}%
\thanks{A.~E.~Kal{\o}r and P. Popovski are with the Department of Electronic Systems, Aalborg University, 9220 Aalborg, Denmark (e-mail: aek@es.aau.dk; petarp@es.aau.dk).}}

\maketitle

\begin{abstract}
Age of Information (AoI) has recently received much attention due to its relevance for IoT sensing and monitoring. In this paper, we consider the problem of minimizing the AoI in a system in which a set of sources are observed by multiple sensors in a many-to-many relationship, and the probability that a sensor observes a source depends on the source's state. This model represents many practical scenarios, such as when multiple cameras or microphones are deployed to monitor objects moving in certain areas.
We formulate the scheduling problem as a Markov Decision Process, and show how the age-optimal scheduling policy can be obtained. We further consider partially observable variants of the problem, and devise approximate policies for large state spaces. The evaluations show that the approximate policies work well in the considered scenarios, while the fact that sensors can observe multiple sources is beneficial, especially when there is high uncertainty of the source states.
\end{abstract}

\section{Introduction}
Sensing and monitoring of the environment is a generic use case for the Internet of Things (IoT) and massive Machine Type Communications (mMTC), representing scenarios in which a very large number of devices are connected wirelessly (up to $300{,}000$ within a single cell~\cite{bockelmannmtc}). In a typical sensing and monitoring scenario, a large number of sensors are deployed across a large area. The sensors sporadically or periodically sense and transmit their observations to a destination node, e.g., in the cloud or at the edge of the cell, which processes the observations and possibly initiates some actions, such as adjusting control parameters, raising an alarm, etc. An example is the smart grid~\cite{jjnsmartmetering,gungorsmartgrid}, in which sensors continuously monitor the state of the power grid and report their observations to a central controller. Other examples include industrial manufacturing, where cameras are used to aid process tracking, and environmental monitoring, such as temperature and humidity~\cite{zanella14}.

\begin{figure}
    \centering
    \begin{tikzpicture}[every node/.style={inner sep=0,outer sep=0},scale=0.75]
\pgfdeclareimage[width=0.75cm]{bs}{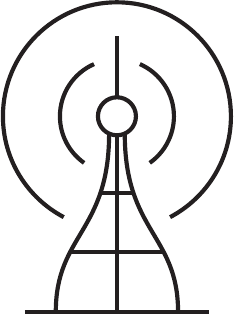}
\pgfdeclareimage[width=0.25cm]{source}{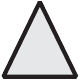}
\pgfdeclareimage[width=0.25cm]{sensor}{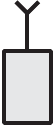}

\pgfdeclareimage[height=0.25cm]{bs_small}{clipart/Objects_tower_B.pdf}
\pgfdeclareimage[height=0.25cm]{source_small}{clipart/Objects_Iot_health.pdf}
\pgfdeclareimage[height=0.25cm]{sensor_small}{clipart/Objects_Iot_simple.pdf}

\draw[] (-1.2,-0.3) ellipse (1.7cm and 1.0cm);
\draw[] (1.2,-0.3) ellipse (1.7cm and 1.0cm);
\draw[very thick] (0,0.3) ellipse (3.4cm and 2cm);

\node (s11) at (-2.4,0.2) {\pgfuseimage{source}};
\node (s12) at (-2.0,-0.4) {\pgfuseimage{source}};
\node (s13) at (-0.06,0.2) {\pgfuseimage{source}};
\node (s21) at (0.05,-0.5) {\pgfuseimage{source}};
\node (s22) at (1.8,0.5) {\pgfuseimage{source}};
\node (s23) at (2.0,-0.8) {\pgfuseimage{source}};
\node (sensor1) at (-1.2,0.0) {\pgfuseimage{sensor}};
\node (sensor2) at (1.2,0.0) {\pgfuseimage{sensor}};

\node[] (bs) at (0,1.5) {\pgfuseimage{bs}};

\draw[dashed] (s11) -- (sensor1);
\draw[dashed] (s12) -- (sensor1);
\draw[dashed] (s13) -- (sensor1);
\draw[dashed] (s21) -- (sensor1);
\draw[dashed] (s13) -- (sensor2);
\draw[dashed] (s21) -- (sensor2);
\draw[dashed] (s22) -- (sensor2);
\draw[dashed] (s23) -- (sensor2);

\draw[thick] (sensor1.north east) -- (bs.west);
\draw[thick] (sensor2.north west) -- (bs.east);

\begin{scope}[shift={(-0.2,-0.5)}]
\draw[fill=white] (1.5,1.7) rectangle (3.6,2.9);
\node[] at (1.7,1.9) {\pgfuseimage{source_small}};
\node[] at (1.7,2.3) {\pgfuseimage{sensor_small}};
\node[] at (1.7,2.7) {\pgfuseimage{bs_small}};
\node[anchor=west] at (2.0,1.9) {\scriptsize{Source}};
\node[anchor=west] at (2.0,2.3) {\scriptsize{Sensor}};
\node[anchor=west] at (2.0,2.7) {\scriptsize{Destination}};
\end{scope}
\end{tikzpicture}
    \caption{The studied scenario comprising a number of sources that are observed by sensors, and a destination node, assumed to be located at the base station, that requests the observations from the sensors. The objective is to schedule the sensors so as to minimize the AoI of the sources.}
    \label{fig:sysmodel_illu}
\end{figure}

However, the large number of IoT devices imposes a significant constraint on the number of communicated observations. Hence, designing intelligent schemes for selecting when and which observations to transmit can provide notable gains in the performance of such systems. Although defining a data relevance or importance measure is inherently an application-specific task, for a large amount of IoT use cases the age of the observations is a reasonable criterion. Here the \emph{age} refers to the time elapsed since the generation of the most recent observation that is known to the destination. Returning to the previous examples, timely updates are critical in smart grid  systems, where action should be taken in case of anomaly. Similarly, environmental monitoring calls for frequent reporting. This wide applicability of using the age of the observations as a relevance measure has led to the notion of \emph{Age of Information} (AoI), which has inspired a large number of works (see~\cite{kosta17} and \cref{sec:relatedwork}).

Another characteristic that is shared among many IoT use cases is the fact that the sensor observations may be correlated. For instance, this may arise in sampling a physical phenomenon that is not confined only to the points where the sensors are deployed, but spans an area monitored by several sensors, see \cref{fig:sysmodel_illu}. Conversely, a single sensor may be able to observe several phenomena at the same time. Unless this correlation is considered in the design of the communication protocols, it is likely to lead to redundant data being communicated to the destination. In addition, the phenomena may be dynamic, with features that change  over time. For instance, the oberved entity may be mobile, entering/leaving the range of each individual sensor; or, a phenomenon may be observable by the sensors only in certain states, e.g., when  powered on.

We study a generalized version of the system depicted in \cref{fig:sysmodel_illu}, in which a destination node, located at the base station, monitors a set of dynamic sources with states defined by a Markov chain. The sources are monitored through a set of sensors. A sensor can observe a given source with some probability that depends on the state of the source. At each time slot, a scheduler at the destination node requests exactly one sensor to perform a measurement, and the sensor reports the measurement through an erasure channel to the destination node. This model can represent an industrial scenario with automatic guided vehicles (AGVs), observed by a number of cameras. The location of each AGV is represented by its state and the probability that a camera observes an AGV depends on whether the AGV is within the camera's field of view. We consider two cases: (1) the source states are fully observable to the scheduler, and (2) the source states are unobservable and need to be inferred through the measurements. In the case of AGVs, the fully observable case could represent the scenario in which the AGVs have localization systems so that their locations are known to the scheduler. Conversely, in the unobservable case, the AGV locations are only observed through the sampled cameras, e.g., using image recognition.

This scenario represents a somewhat extreme case of sensing, as sources are either perfectly observed or not observed at all. Yet,  it allows us to conveniently abstract detailed aspects, such as source coding and distortion.
Our overall aim is to design scheduling schemes that minimize the average AoI at the destination node, so that it has a timely overview of the sources. The combination of dynamic phenomena and the fact that sensors can observe multiple sources introduces a tradeoff between getting an update from a sensor that observes a single source for which the current data is with high AoI and likely outdated, or sampling a sensor that observes multiple sources, for which the current data is fresher (lower AoI). The long-term average AoI is minimized by a long-sighted strategy and depends on the dynamics of the source states, as illustrated through the following example.

\subsection{Motivating Example}
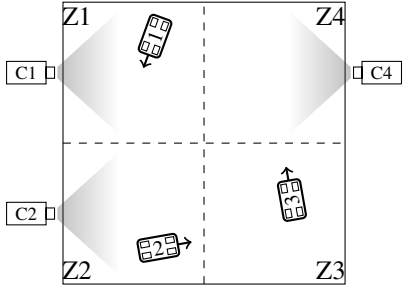
\begin{figure}
    \centering
    \begin{tikzpicture}[every node/.style={inner sep=0,outer sep=0},scale=0.75]

\draw (0,0) rectangle (5,5);
\draw[dashed] (2.5,0) -- ++(0,5);
\draw[dashed] (0,2.5) -- ++(5,0);
\node at (0.25,0.25) {$\text{Z2}$};
\node at (4.75,0.25) {$\text{Z3}$};
\node at (4.75,4.75) {$\text{Z4}$};
\node at (0.25,4.75) {$\text{Z1}$};

\begin{scope}[shift={(1.7,4.0)},rotate=70]
\draw [rounded corners=0.05cm,thick] (0.0,0.0) rectangle +(0.7,0.4) node[midway,rotate=70] {\footnotesize 1};
\draw (0.05,0.05) rectangle +(0.15,0.1);
\draw (0.5,0.05) rectangle +(0.15,0.1);
\draw (0.05,0.25) rectangle +(0.15,0.1);
\draw (0.5,0.25) rectangle +(0.15,0.1);
\draw[->,thick] (0.0,0.2) -- ++(-0.25,0.0);
\end{scope}
\begin{scope}[shift={(2.0,.9)},rotate=190]
\draw [rounded corners=0.05cm,thick] (0.0,0.0) rectangle +(0.7,0.4) node[midway,rotate=10] {\footnotesize 2};
\draw (0.05,0.05) rectangle +(0.15,0.1);
\draw (0.5,0.05) rectangle +(0.15,0.1);
\draw (0.05,0.25) rectangle +(0.15,0.1);
\draw (0.5,0.25) rectangle +(0.15,0.1);
\draw[->,thick] (0.0,0.2) -- ++(-0.25,0.0);
\end{scope}
\begin{scope}[shift={(3.8,1.8)},rotate=280]
\draw [rounded corners=0.05cm,thick] (0.0,0.0) rectangle +(0.7,0.4) node[midway,rotate=100] {\footnotesize 3};
\draw (0.05,0.05) rectangle +(0.15,0.1);
\draw (0.5,0.05) rectangle +(0.15,0.1);
\draw (0.05,0.25) rectangle +(0.15,0.1);
\draw (0.5,0.25) rectangle +(0.15,0.1);
\draw[->,thick] (0.0,0.2) -- ++(-0.25,0.0);
\end{scope}

\begin{scope}[shift={(6.0,3.55)}]
\draw[] (0,0) -- ++(-0.7,0) -- ++(0,0.3) -- ++(-0.15,0) -- ++(0,-0.2) -- ++(0.15,0) -- ++(0,0.3) -- ++(0.7,0) -- ++(0,-0.4);
\node at (-0.35,0.2) {\scriptsize C4};
\end{scope}
\draw[left color=white,right color=black!80,opacity=0.3,draw=none] (5.1,3.85) -- ++(-1.1,1.0) -- ++(0,-2.2) -- ++(1.1,1.0);

\begin{scope}[shift={(-1.0,3.95)},rotate=180]
\draw[] (0,0) -- ++(-0.7,0) -- ++(0,0.3) -- ++(-0.15,0) -- ++(0,-0.2) -- ++(0.15,0) -- ++(0,0.3) -- ++(0.7,0) -- ++(0,-0.4);
\node at (-0.35,0.2) {\scriptsize C1};
\end{scope}
\draw[left color=black!80,right color=white,opacity=0.3,draw=none] (-0.1,3.85) -- ++(1.1,1.0) -- ++(0,-2.2) -- ++(-1.1,1.0);

\begin{scope}[shift={(-1.0,1.45)},rotate=180]
\draw[] (0,0) -- ++(-0.7,0) -- ++(0,0.3) -- ++(-0.15,0) -- ++(0,-0.2) -- ++(0.15,0) -- ++(0,0.3) -- ++(0.7,0) -- ++(0,-0.4);
\node at (-0.35,0.2) {\scriptsize C2};
\end{scope}
\draw[left color=black!80,right color=white,opacity=0.3,draw=none] (-0.1,1.35) -- ++(1.1,1.0) -- ++(0,-2.2) -- ++(-1.1,1.0);

\end{tikzpicture}
    \caption{Example scenario of a factory with AGVs and three cameras that cover different zones. We consider the problem of scheduling sensors (cameras) to maximize the freshness of their observations for a set of sources (AGVs). Each sensor may observe multiple sources at the same time, e.g., if multiple AGVs are within the same zone.}
    \label{fig:motivation}
\end{figure}

Consider a factory with AGVs, where cameras are used to monitor the AGVs in order to ensure that they are operating normally. Suppose for simplicity that the factory is divided into four zones (zones 1--4), as illustrated in \cref{fig:motivation}. Zones 1, 2 and 4 are covered by cameras C1, C2 and C4, respectively, while zone 3 is hidden for the cameras. In each time slot, the AGVs move from one zone to the next in an anti-clockwise  direction, except in zone 1, where they spend two time slots before moving to the next zone, e.g., due to physical obstacles that prevents the AGVs from moving at their regular speed. Now, suppose further that due to capacity limitations it is only possible to request an image from one camera every time slot, and thus they need to be scheduled. Putting this into the framework presented earlier, the AGVs are sources with states that corresponds to their location, and the cameras are sensors. A na\"{i}ve scheduling scheme would be to schedule the cameras in a round-robin fashion. Although this may work reasonably well in a small and simplistic scenario like this, it will not perform well in more complex scenarios with many AGVs and zones. An alternative approach would be to exploit the structure of the factory and schedule the cameras only when the AGVs are expected to be inside the cameras fields of view. For instance, a myopic AoI scheduler would schedule the camera with the AGV that has been captured least recently once it enters a camera region. However, as illustrated in \cref{tab:motivation_aoi}, it is possible to do even better using a long-sighted policy that exploits the fact that the AGVs move slower in region 1, which is more likely to contain multiple AGVs in the same time slot.
With the optimal policy, the total age can be reduced from $38$ time slots in the myopic policy to $36$ time slots. This illustrates that finding the optimal scheduling policy is non-trivial, since it requires to account  both for the dynamics of the AGVs and the locations of all AGVs.

Note also that we have assumed deterministic movement of the AGVs, as well as ideal sensor observations. Scheduling becomes more challenging when the movement is random and the observations are non-ideal, since the scheduler may have to start scheduling an AGV several time slots before it is expected to enter the hidden zone. Similarly, it is more difficult to define a good policy when the locations of the AGVs are not fully observable, but revealed only when the AGVs are captured by one the cameras. Here the scheduler needs to consider both the AGV dynamics and its belief in the location estimates.

\begin{table*}[t]
\centering
\caption{Myopic and optimal policies for a single round in the scenario from \cref{fig:motivation} with four zones and three AGVs starting in zones Z1, Z2, and Z3 and with initial AoIs of 1, 1, and 4, respectively. The decisions are written $(\text{AoI 1},\text{AoI 2},\text{AoI 3})\to\text{decision}$. The Myopic policy achieves a total AoI of $38$ while the optimal achieves $36$.}\label{tab:motivation_aoi}
\begin{tabular}{l|llllllllll}
  \toprule
  Time slot & 1 & 2 & 3 & 4 & 5 & 6 & \multirow{4}{1.5em}{$\cdots$}\\
  Locations       & $(\text{Z1},\text{Z2},\text{Z3})$ & $(\text{Z2},\text{Z3},\text{Z4})$ & $(\text{Z3},\text{Z4},\text{Z1})$ & $(\text{Z4},\text{Z1},\text{Z1})$ & $(\text{Z1},\text{Z1},\text{Z2})$ & $(\text{Z1},\text{Z2},\text{Z3})$ & \\ 
  Myopic  & $(1,1,4)\to\text{C1}$ & $(1,2,5)\to\text{C4}$ & $(2,3,1)\to\text{C4}$ & $(3,1,2)\to\text{C4}$ & $(1,2,3)\to\text{C1}$ & $(1,1,4)\to\text{C1}$ & \\ %
  Optimal  & $(1,1,4)\to\text{C1}$ & $(1,2,5)\to\text{C4}$ & $(2,3,1)\to\text{C4}$ & $(3,1,2)\to\text{C1}$ & $(4,1,1)\to\text{C1}$ & $(1,1,2)\to\text{C1}$ & \\ %
   \bottomrule
\end{tabular}
\end{table*}

\subsection{Related Work}\label{sec:relatedwork}
The concept of Age of Information has recently received much attention in the context of Internet of Things (IoT), where it has been used to characterize the fundamental trade-off between update rate and timeliness in Poisson update systems a capacity constraint, see e.g.,~\cite{kaul12realtime,costa16,kosta17,sun17,najm19}. A common characteristic in many of these systems is that the AoI is a U-shaped function of the update rate and there exists a rate that minimizes the AoI.

The AoI literature for systems with multiple sources can roughly be divided into: (1) queuing-based systems, and (2) scheduling- or sampling-based systems. Notable contributions in the first category include~\cite{javani19} in which updates from each source are independently sensed and processed by multiple servers before being delivered to the destination. \cite{najm18,farazi19,moltafet20,kaul20} characterize the AoI when updates from multiple sources are delivered to the destination through a shared queue, and similar scenarios, but with scheduling and packet management, are studied in~\cite{yin18aoi,beytur19,maatouk20lex,moltafet20pm}. Packet management schemes for the case in which updates are queued at the source before being transmitted over a shared medium are studied in~\cite{kosta19}.

Contributions related to sampling of multiple sources include~\cite{tripathi17}, which proposes various scheduling policies to minimize the AoI of independent sources. \cite{bedewy20} considers the problem of scheduling sources to transmit through a shared queue. They define the problem as a Markov Decision Process (MDP) and derives the optimal policy. A similar problem is studied in~\cite{jiang19}, where each source has its own queue. In~\cite{tang20} sources are scheduled to transmit over erasure channels, whose erasure probabilities evolve according to a Markov chain. The authors present the optimal policy under the assumption that the scheduler can observe the instantaneous channel states before making a scheduling decision. In contrast to our work, they do not distinguish between sources and sensors.

Correlated sources and joint observations in the context of AoI have previously been studied in~\cite{jiang19randomfield,hribar19,he19,kalor19,shao20,zhou20}. In~\cite{jiang19randomfield}, the authors consider remote estimation of a Gaussian Markov random field that is sampled and queued by multiple sources before it is sent to the destination node. A similar situation with spatially correlated observations are studied in~\cite{hribar19}, where the authors show how the correlation can be exploited to reduce the energy consumption of the sources. For the case with discrete sources, \cite{he19} studies a scenario comprising a number of cameras that each captures a specific scene and transmits its images to a destination fog computing node over a shared channel. The same scenes may be captured by multiple cameras, and the authors propose a joint fog node assignment and scheduling scheme so that the average age of each scene at the fog nodes is minimized. We note that our work represents a generalization of that scenario by allowing each camera (sensor) to observe multiple scenes (sources), and also allows the sources to be dynamic. Similar scenarios have been studied in~\cite{kalor19,shao20,zhou20}. In \cite{kalor19} we proposed a scheduling mechanism for the case in which each sensor can observe multiple sources, and each source update is observed by a random subset of the sensors. This work differs from~\cite{kalor19} by assuming that observations are generated ``at will'', and by allowing sources to be stateful, so that the probability that a sensor observes a source depends on its state. Furthermore, we rigorously prove the optimality of the proposed policies. A similar scenario is considered in~\cite{shao20}, where multiple sensors observe a single source that generates updates and each update is observed by a given sensor with certain probability. The authors focus on the partially observable case where the sensor observations are unknown to the scheduler, and they study the performance of the myopic scheduling policy for the problem. Finally, \cite{zhou20} propose a scheduling scheme for the case with two types of correlated sensors, one receiving random updates from the source and one that is able to sample the source. They show that a long-sighted policy provides significantly lower AoI than the myopic policy.

The methods that we use in this paper to obtain scheduling policies have previously been used in the context of AoI. Due to their simplicity, stationary random policies have been studied extensively for other AoI problems in e.g.,~\cite{talak17multihop,talak20interference}, where the latter rigorously provides bounds on the optimality gap between an optimal random policies and minimal achievable AoI. However, because of the correlation in our model, the analysis is not directly transferable to our problem. The methodology used to obtain the optimal policy (i.e., solving an average-cost MDP) has been applied in e.g.,~\cite{hsu2017,bedewy20,tang20}, and our proofs share similarities with the ones in those works. Finally, myopic policies as approximate solutions to the MDP problem have been considered in various detail for different AoI problems in e.g.,~\cite{katoda18,tang20,jiang19}.

\subsection{Contributions and Paper Organization}
The ambition of this work is not to provide new fundamental results on AoI, but rather to study how dynamic sensors influence the scheduling decisions towards minimizing the AoI. The contributions can be summarized as follows.
\begin{itemize}
    \item We define a general system model that captures many IoT scenarios, such as camera monitoring, in which sources are dynamic and their observation probabilities change according to their state. Furthermore, sources and sensors have a many-to-many relationship: each source can be observed by multiple sensors and each sensor can observe multiple sources. We rigorously prove the existence of an optimal deterministic stationary policy for the problem when the source states are fully observable, and we show how it can be found using value iteration.
    \item We analyze three partially observable variants of the problem, in which the source states cannot be directly observed, but instead (1) the source states are revealed only for the observed sources after a delay, e.g., due to processing time, (2) the source identities, but not the states, are revealed of the observed sources, or (3) the sensor measurements do not reveal whether they contain a source. Furthermore, we devise heuristic scheduling policies for these problems.
    \item  We characterize  different policies numerically and study the impact of the source dynamics and the fact that sensors can observe multiple sources on the AoI. We show that the latter can be beneficial especially in the partially observable case, provided that the policy is able to account for the uncertainty of source states.
\end{itemize}

The remainder of the paper is organized as follows. In \cref{sec:sysmodel} we formally define the system model. We then analyze stationary random scheduling policies in \cref{sec:random_policy}, followed by the optimal scheduling policies in \cref{sec:policy_obs_states} for the case where the source states are fully observable.  \cref{sec:policy_unobs_states} treats two cases: (1) unobservable source states, where the sources can be identified from the observations, and (2) undetectable sources, where the observations do not reveal the detected sources to the scheduler. \Cref{sec:numres} presents numerical results, and finally the paper is concluded in \cref{sec:conclusion}.

\section{System Model and Problem Formulation}\label{sec:sysmodel}
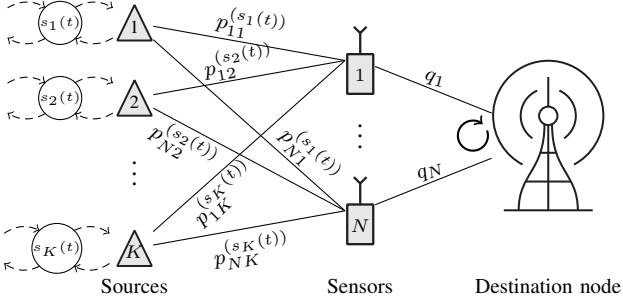
\begin{figure}
    \centering
    \begin{tikzpicture}[every node/.style={inner sep=0,outer sep=0}]
\pgfdeclareimage[width=1.5cm]{bs}{clipart/Objects_tower_B.pdf}
\pgfdeclareimage[width=0.5cm]{source}{clipart/Objects_Iot_health.pdf}
\pgfdeclareimage[width=0.4cm]{sensor}{clipart/Objects_Iot_simple.pdf}

\node (s11) at (0,1.5) {\pgfuseimage{source}};
\node at (0,1.45) {\scriptsize 1};
\node (s12) at (0,0.5) {\pgfuseimage{source}};
\node at (0,0.45) {\scriptsize 2};
\node (s21) at (0,-1.5) {\pgfuseimage{source}};
\node at (0,-1.55) {\scriptsize $K$};
\node at (0,-0.4) {$\vdots$};

\node (sensor1) at (3,1) {\pgfuseimage{sensor}};
\node at (3,0.8) {\scriptsize $1$};
\node (sensor2) at (3,-1) {\pgfuseimage{sensor}};
\node at (3,-1.2) {\scriptsize $N$};
\node at (3,0.1) {$\vdots$};

\node[] (bs) at (5.5,0) {\pgfuseimage{bs}};
\draw[thick,->] (4.5,0) +(20:0.2) arc(20:-310:0.2);

\node[circle,draw,minimum width=1.5em] (state1) [left=0.45 of s11] {\tiny $s_1(t)$};
\path[densely dashed,->] (state1) edge[bend left] ($(state1)+(-0.7,-0.2)$);
\path[densely dashed,->] ($(state1)+(-0.7,0.2)$) edge[bend left] (state1);
\path[densely dashed,->] (state1) edge[bend left] ($(state1)+(+0.7,0.2)$);
\path[densely dashed,->] ($(state1)+(+0.7,-0.2)$) edge[bend left] (state1);

\node[circle,draw,minimum width=1.5em] (state2) [left=0.45 of s12] {\tiny $s_2(t)$};
\path[densely dashed,->] (state2) edge[bend left] ($(state2)+(-0.7,-0.2)$);
\path[densely dashed,->] ($(state2)+(-0.7,0.2)$) edge[bend left] (state2);
\path[densely dashed,->] (state2) edge[bend left] ($(state2)+(+0.7,0.2)$);
\path[densely dashed,->] ($(state2)+(+0.7,-0.2)$) edge[bend left] (state2);

\node[circle,draw,minimum width=1.5em] (statek) [left=0.45 of s21] {\tiny $s_K(t)$};
\path[densely dashed,->] (statek) edge[bend left] ($(statek)+(-0.7,-0.2)$);
\path[densely dashed,->] ($(statek)+(-0.7,0.2)$) edge[bend left] (statek);
\path[densely dashed,->] (statek) edge[bend left] ($(statek)+(+0.7,0.2)$);
\path[densely dashed,->] ($(statek)+(+0.7,-0.2)$) edge[bend left] (statek);

\draw (s11) -- (sensor1.west) node[midway,above=0.05,rotate=-10] {\footnotesize $p_{11}^{(s_1(t))}$};
\draw (s12) -- (sensor1.west) node[pos=0.46,above=0.03,rotate=10] {\footnotesize $p_{12}^{(s_2(t))}$};
\draw (s21) -- (sensor1.west) node[pos=0.31,below=0.05,rotate=45] {\footnotesize $p_{1K}^{(s_K(t))}$};
\draw (s11) -- (sensor2.west) node[pos=0.76,above=0.05,rotate=-45] {\footnotesize $p_{N1}^{(s_1(t))}$};
\draw (s12) -- (sensor2.west) node[pos=0.2,below=0.05,rotate=-30] {\footnotesize $p_{N2}^{(s_2(t))}$};
\draw (s21) -- (sensor2.west) node[midway,below=0.1,rotate=10] {\footnotesize $p_{NK}^{(s_K(t))}$};

\draw (sensor1) -- (bs) node[midway,above=0.05,rotate=-15] {\footnotesize $q_1$};
\draw (sensor2) -- (bs) node[midway,above=0.05,rotate=20] {\footnotesize $q_N$};

\node[] at (0,-2) {\footnotesize{Sources}};
\node[] at (3,-2) {\footnotesize{Sensors}};
\node[align=center] at (5.5,-2) {\footnotesize{Destination node}};

\end{tikzpicture}
    \caption{A general system with $K$ stateful sources, $N$ sensors and one destination node. In each time slot $t$, a sensor is requested to perform a measurement. Sensor $n$ observes source $k$ with probability $p_{nk}^{(s_k(t))}$, where $s_k(t)$ denotes the state of source $k$. The measurement from sensor $n$ is delivered to the destination node with probability $q_n$.}
    \label{fig:sysmodel}
\end{figure}
We consider a system with $N$ sensors that observe $K$ stateful sources, as illustrated in \cref{fig:sysmodel}. The time is divided into time slots and in each slot a sensor, denoted $a_t\in\{1,2,\ldots,N\}$, is requested to perform a measurement, which is intended to be delivered to a central destination node, which also acts as the scheduler. The measurement contains a random subset of the sources, and the probability that a sensor measurement contains a source depends on the state of the source. We denote the state of source $k$ in time slot $t=1,2,\ldots$ by $s_{k}(t)\in\{1,2,\ldots,S_k\}$, and the probability that a measurement by sensor $n$ contains source $k$ in state $s$ by $p_{nk}^{(s)}$. $s_k(t)$ is modeled as a discrete-time Markov chain with transition matrix $\mathbf{R}_k=[r_{ij}^{(k)}]$, where $r_{ij}^{(k)}=\Pr(s_k(t)=j|s_k(t-1)=i)$. The Markov chain is assumed to be irreducible and aperiodic, and to have the stationary state distribution $\bm{\beta}_k$ defined as the vector that satisfies $\bm{\beta}_k\mathbf{R}_k=\bm{\beta}_k$. In the remainder of the paper, we will refer to the event that a sensor measurement contains a given source as the sensor \emph{observes} the source.
The measurements are delivered from the sensor to the destination node through an erasure channel with error probability $1-q_n$, which captures both errors in the request and the delivery transmissions.

This system model contains  interesting special cases, one of which is the capacity-constrained channel: in each time slot, a sensor can deliver updates from at most $C\le K$ sources. This can be modelled by defining, for each sensor, $\binom{K}{C}$ \emph{virtual sensors} that observe the various subsets of $C$ sources.

The overall goal is to minimize the average AoI at the destination node. Let $u_{n}(t)$ be a binary indicator of the event that sensor $n$ has been scheduled at time $t$, 
\begin{equation}
u_{n}(t)=\mathbbm{1}\left[a_t=n\right],
\end{equation}
where $\mathbbm{1}[\cdot]$ is the indicator function that equals one if the argument is true and otherwise is zero. The scheduled sensor can carry up to $K$ observations, and thus the AoI of source $k$ at the destination node, denoted by $\Delta_k(t)$, is described by:
\begin{equation}\label{eq:aoiprocess}
\Delta_k(t+1)=
\begin{cases}
  1 & \text{if }\sum_n\zeta_{nk}(t)u_{n}(t)=1,\\
  \Delta_k(t)+1 & \text{otherwise,}
\end{cases}
\end{equation}
where $\zeta_{nk}(t)$ is the Bernoulli random variable that indicates successful measurement and transmission of source $k$, i.e., is one with probability $p_{nk}^{(s_k(t))}q_n$ and zero with probability $1-p_{nk}^{(s_k(t))}q_n$. In other words, the AoI of source $k$ is reset if the sensor successfully measures source $k$ \emph{and} the measurement is successfully transferred to the destination. Each source is assumed to have a state in which it can be observed by a sensor, i.e., $\sum_{n=1}^N\sum_{s=1}^{S_k}q_n p_{nk}^{(s)}>0$ for all $k$. To simplify the subsequent notation we let $\Delta_k(0)=0$ for all $k$. Then, we denote the average AoI at the destination at time $t$ as
\begin{equation}\label{eq:deltaprime}
  \Delta'(t)= \frac{1}{K}\sum_{k=1}^K \Delta_k(t)
\end{equation}
minimize the long term average AoI
\begin{equation}
  \underset{\pi\in\Pi}{\text{minimize}} \limsup_{T\to\infty} \frac{1}{T} \E\left[\sum_{t=1}^T\Delta'(t)\right].
\end{equation}
by seeking a scheduling policy $\pi$. Here $\pi\in\Pi$ is a sequence of decision rules that maps the entire history of states and observations to the action space, potentially through a random mapping, and the expectation is taken over the decisions, the AoI process, and the source state transitions. The problem forms an average-cost MDP, in which the actions $a_t$ belong to the finite action space $\{1,2,\ldots,N\}$. The state space represents the current AoI of each source as well as the states of the sources. To formalize the setting, we denote the system state at the destination node as
\begin{equation}
    \bm{\Lambda}=(\bm{S},\bm{\Delta}),
\end{equation}
where $\bm{S}=\{s_1,s_2,\ldots,s_K\}$ and $\bm{\Delta}=\{\Delta_1,\Delta_2,\ldots,\Delta_K\}$ are the vectors of source states and AoIs, respectively. To be consistent with the MDP literature and to make the dependency on the action explicit, we define the cost of taking action $a$ in state $\bm{\Lambda}$ as the expected AoI in the following slot:
\begin{align}\label{eq:costfun}
    C(\bm{\Lambda},a)&=\frac{1}{K}\sum_{k=1}^K\Delta_k+1-\frac{1}{K}\sum_{n=1}^N\sum_{k=1}^K u_{n}p_{nk}^{(s_k)}q_n\Delta_k,
\end{align}
where we used  \cref{eq:aoiprocess}. Thus, optimization problem is:
\begin{equation}\label{eq:centralproblem}
  \underset{\pi\in\Pi}{\text{minimize}} \limsup_{T\to\infty} \frac{1}{T+1} \E\left[\sum_{t=0}^T C(\bm{\Lambda}_t,a_t)\bigg|\bm{\Lambda}_0\right].
\end{equation}

We assume that the system parameters are known. In  practice, they would need first to be learned (exploration phase); however, this learning phase is out of scope.  

\section{Random Sampling Policy}\label{sec:random_policy}
We start by studying stationary random sampling policies, under which the sensor selection does not depend on the state. Contrary to the adaptive policies presented later (i.e., with memory), the memoryless nature of the random sampling policy allows us to derive a closed-form expression of the expected long-term AoI under that policy.

\subsection{Long-term Average AoI}
The random sampling policy selects in each slot a sensor $a_t$ drawn from the distribution $\Pr(a_t=n)=\bar{a}_n$, where $\sum_{n=1}^N \bar{a}_n=1$, i.e., sensor $n$ is selected with probability $\bar{a}_n$. However, while the sampling decisions in each slot are independent, the observations are correlated over time as they depend on the source states, which do not change independently from one slot to another. To see this, consider the camera scenario considered in the introduction. Clearly, if a source is in zones 1, 3, or 4, it is likely that it will also be observable in the next time slot. Similarly, if on the other hand a source is in zone 2, it is likely to be invisible in the following time slot since zone 2 is not covered by any camera. In particular, the observations evolve according to a Markov chain which is independent of the scheduling decisions. Using this observation, we can construct the Markov chain of the AoI process and derive the following result.
\begin{theorem}\label{theo:randompolicyavgaoi}
The random sampling policy that schedules in each slot a sensor drawn according to the distribution $\Pr(a_t=n)=\bar{a}_n$ satisfying $\sum_{n=1}^N\sum_{s=1}^{S_k}\bar{a}_n q_n p_{nk}^{(s)}>0$ achieves an expected long-term average AoI of
\begin{equation}
    E[\Delta_{\text{random}}]=\frac{1}{K} \sum_{k=1}^K E[\Delta_{\text{random},k}],
\end{equation}
where $E[\Delta_{\text{random},k}]$ is the expected AoI of source $k$ given as
\begin{equation}
    E[\Delta_{\text{random},k}]=\bm{\beta}_k\mathbf{R}_k^{(\text{succ})}\left(\mathbf{I}-\mathbf{R}_k^{(\text{fail})}\right)^{-2}\mathbf{1}^T,
\end{equation}
$\mathbf{I}$ and $\mathbf{1}$ are the identity matrix and the row vector of all ones of appropriate dimensions, respectively, and
\begin{align}
    \mathbf{R}_k^{(\text{succ})}&=\diag\left(\mathbf{p}_k\right)\mathbf{R}_k,\\
    \mathbf{R}_k^{(\text{fail})}&=\left(\mathbf{I}-\diag\left(\mathbf{p}_k\right)\right)\mathbf{R}_k,\\
    \mathbf{p}_k&=\sum_{n=1}^N \bar{a}_n q_n\begin{bmatrix}p_{nk}^{(1)} & p_{nk}^{(2)} &\cdots & p_{nk}^{(S_k)}\end{bmatrix}.
\end{align}
\end{theorem}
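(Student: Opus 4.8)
The plan is to exploit that the source-state chains evolve autonomously, independently of the scheduling decisions, together with the linearity of the cost in the per-source AoIs. By linearity of expectation, $\frac1T\E\!\left[\sum_{t=1}^T\Delta'(t)\right]=\frac1K\sum_{k=1}^K\frac1T\sum_{t=1}^T\E[\Delta_k(t)]$, so it suffices to analyze each source $k$ in isolation. Under the random policy, conditioned on $s_k(t)=s$ the AoI of source $k$ is reset at slot $t+1$ with probability $\frac1N\sum_{n=1}^N q_n p_{nk}^{(s)}=[\mathbf{p}_k]_s$, independently of $\Delta_k(t)$ and of the past given $s_k(t)$; only the marginal observation probability of source $k$ enters, so correlations between the sources a given sensor observes are irrelevant.

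First I would introduce the AoI-augmented chain $X_k(t)=(s_k(t),\Delta_k(t))$ on $\{1,\dots,S_k\}\times\mathbb{Z}_{\ge1}$, whose one-step kernel factors into ``$s_k$ moves by $\mathbf{R}_k$'' and, given $s_k(t)$, ``$\Delta_k$ resets to $1$ with probability $[\mathbf{p}_k]_{s_k(t)}$, else increments''. The matrices $\mathbf{R}_k^{(\text{succ})}=\diag(\mathbf{p}_k)\mathbf{R}_k$ and $\mathbf{R}_k^{(\text{fail})}=(\mathbf{I}-\diag(\mathbf{p}_k))\mathbf{R}_k$ are exactly the ``reset-then-transition'' and ``no-reset-then-transition'' sub-kernels, with $\mathbf{R}_k^{(\text{succ})}+\mathbf{R}_k^{(\text{fail})}=\mathbf{R}_k$; the factor $\diag(\mathbf{p}_k)$ sits on the left because, per \cref{eq:aoiprocess}, the reset at slot $t+1$ is governed by the pre-transition state $s_k(t)$. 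I would then record the spectral fact $\rho\!\left(\mathbf{R}_k^{(\text{fail})}\right)<1$: $\mathbf{R}_k$ is irreducible and, by the standing assumption $\sum_n\sum_s q_n p_{nk}^{(s)}>0$, at least one state is ``leaky'' ($[\mathbf{p}_k]_s>0$), so every state communicates with a leaky state and $\left(\mathbf{R}_k^{(\text{fail})}\right)^m\!\to\mathbf 0$ geometrically. Hence the gaps between successive resets have a geometric tail, the resets form a delayed renewal process with finite mean cycle, $X_k$ is positive recurrent, and $\lim_{t\to\infty}\E[\Delta_k(t)]$ exists and is finite; Ces\`aro averaging then identifies the $\limsup$ in the problem with this limit.

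Next I would determine the limiting law of $\Delta_k$. Since $s_k$ is autonomous, $s_k(t)$ converges in distribution to $\bm{\beta}_k$ (here aperiodicity of $\mathbf{R}_k$ is used), and conditioning on the $s_k$-path makes the per-slot reset events conditionally independent; tracing back to the last reset before slot $t$ gives, for each fixed $m\ge1$, $\Pr(\Delta_k(t)=m)\to\bm{\beta}_k\,\mathbf{R}_k^{(\text{succ})}\bigl(\mathbf{R}_k^{(\text{fail})}\bigr)^{m-1}\mathbf{1}^T$ as $t\to\infty$, i.e.\ ``$s_k$ stationary, one successful slot, then $m-1$ failing slots''. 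A one-line computation using $\bm{\beta}_k\mathbf{R}_k=\bm{\beta}_k$ and $\mathbf{R}_k^{(\text{succ})}=\mathbf{R}_k-\mathbf{R}_k^{(\text{fail})}$ gives $\bm{\beta}_k\mathbf{R}_k^{(\text{succ})}=\bm{\beta}_k\bigl(\mathbf{I}-\mathbf{R}_k^{(\text{fail})}\bigr)$, so $\sum_{m\ge1}\Pr(\Delta_k=m)\to\bm{\beta}_k\mathbf{1}^T=1$ and the limit is a bona fide distribution. The uniform bound $\Pr(\Delta_k(t)\ge m)\le c\,\rho^{m-1}$ (same geometric argument) licenses dominated convergence in $\E[\Delta_k(t)]=\sum_{m\ge1}\Pr(\Delta_k(t)\ge m)$, and then
\begin{align*}
\E[\Delta_{\text{random},k}]
&=\sum_{m=1}^{\infty} m\,\bm{\beta}_k\,\mathbf{R}_k^{(\text{succ})}\bigl(\mathbf{R}_k^{(\text{fail})}\bigr)^{m-1}\mathbf{1}^T\\
&=\bm{\beta}_k\,\mathbf{R}_k^{(\text{succ})}\Bigl(\mathbf{I}-\mathbf{R}_k^{(\text{fail})}\Bigr)^{-2}\mathbf{1}^T,
\end{align*}
using the operator identity $\sum_{m\ge1}mA^{m-1}=(\mathbf{I}-A)^{-2}$, valid because $\rho\!\left(\mathbf{R}_k^{(\text{fail})}\right)<1$. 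Averaging over $k$ yields the claimed formula.

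I expect the main obstacle to be the probabilistic bookkeeping rather than the algebra: making rigorous that the time-average of the expectations converges to the stationary mean (positive recurrence of $X_k$ plus a dominating geometric tail, so that the $\limsup$ is actually a limit), and fixing the timing so that $\diag(\mathbf{p}_k)$ appears on the left in $\mathbf{R}_k^{(\text{succ})}$. The bound $\rho\!\left(\mathbf{R}_k^{(\text{fail})}\right)<1$ is the other load-bearing step and deserves an explicit Perron--Frobenius-style argument from irreducibility of $\mathbf{R}_k$ and the existence of an observable state, since all the convergence statements above rest on it.
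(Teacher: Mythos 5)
Your proof is correct and follows essentially the same route as the paper: per-source decomposition, the joint (state, AoI) Markov chain with sub-kernels $\mathbf{R}_k^{(\text{succ})}$ and $\mathbf{R}_k^{(\text{fail})}$, the stationary law $\bm{\beta}_k\mathbf{R}_k^{(\text{succ})}\bigl(\mathbf{R}_k^{(\text{fail})}\bigr)^{m-1}\mathbf{1}^T$, and the identity $\sum_{m\ge 1}m\mathbf{A}^{m-1}=(\mathbf{I}-\mathbf{A})^{-2}$ licensed by $\rho\bigl(\mathbf{R}_k^{(\text{fail})}\bigr)<1$ via a Perron--Frobenius argument. The only cosmetic difference is that the paper derives the stationary distribution by solving the balance equations of a $Q$-truncated block transition matrix and then letting $Q\to\infty$, whereas you work on the untruncated chain directly through a last-reset argument with geometric tail bounds and dominated convergence.
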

\begin{proof}
See \cref{app:randompolicyavgaoiproof}.
\end{proof}

It follows immediately from \cref{theo:randompolicyavgaoi} that the expression for the average AoI can be greatly simplified when the sources have a single state, as presented in the following corollary.
\begin{corollary}\label{cor:randompolicyavgaoi_stateless}
For sources with a single state, i.e., $S_k=1$, the long-term average AoI of the random sampling policy is:
\begin{equation}
    E[\Delta_{\text{random}}]=\frac{1}{K} \sum_{k=1}^K \left(\frac{1}{\sum_{n=1}^N \bar{a}_n q_n p_{nk}^{(1)}}\right).
\end{equation}
\end{corollary}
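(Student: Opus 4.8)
The plan is to specialize \cref{theo:randompolicyavgaoi} to the single-state case and show that the matrix expression collapses to a scalar. When $S_k=1$, every quantity associated with source $k$ becomes a scalar. First I would observe that $\mathbf{R}_k$ is the $1\times 1$ matrix $[1]$ (the only transition of a one-state chain), so the stationary distribution is trivially $\bm{\beta}_k=[1]$, and $\diag(\mathbf{p}_k)=\mathbf{p}_k=\frac{1}{N}\sum_{n=1}^N q_n p_{nk}^{(1)}=:p_k$, a scalar in $[0,1]$.

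Next I would substitute these into the definitions of $\mathbf{R}_k^{(\text{succ})}$ and $\mathbf{R}_k^{(\text{fail})}$: we get $\mathbf{R}_k^{(\text{succ})}=p_k$ and $\mathbf{R}_k^{(\text{fail})}=1-p_k$, both scalars. The assumption $\sum_{n}\sum_{s}q_n p_{nk}^{(s)}>0$ guarantees $p_k>0$, so $\mathbf{I}-\mathbf{R}_k^{(\text{fail})}=p_k\neq 0$ and the inverse exists. Plugging into the formula from \cref{theo:randompolicyavgaoi},
\begin{equation*}
E[\Delta_{\text{random},k}]=\bm{\beta}_k\mathbf{R}_k^{(\text{succ})}\left(\mathbf{I}-\mathbf{R}_k^{(\text{fail})}\right)^{-2}\mathbf{1}^T = 1\cdot p_k\cdot p_k^{-2}\cdot 1 = \frac{1}{p_k}=\frac{N}{\sum_{n=1}^N q_n p_{nk}^{(1)}}.
\end{equation*}
Substituting into $E[\Delta_{\text{random}}]=\frac{1}{K}\sum_k E[\Delta_{\text{random},k}]$ and pulling the constant $N$ out of the sum yields exactly the claimed expression.

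There is essentially no obstacle here — the corollary is a direct, almost mechanical specialization of the theorem, and the only things worth remarking on are that the one-state Markov chain is automatically irreducible and aperiodic (so the hypotheses of \cref{theo:randompolicyavgaoi} are met) and that the positivity assumption ensures invertibility. If one wanted a self-contained sanity check, one could also note that with a single state the AoI process under the random policy is the classical geometric renewal process with success probability $p_k$ each slot, whose stationary expected age is well known to be $1/p_k$; this matches the reduction above and could be mentioned in a sentence, but it is not needed given that we may invoke the theorem directly.
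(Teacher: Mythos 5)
Your proposal is correct and follows essentially the same route as the paper: both simply specialize \cref{theo:randompolicyavgaoi} by reducing all matrices to scalars, evaluate $\bm{\beta}_k\mathbf{R}_k^{(\text{succ})}(\mathbf{I}-\mathbf{R}_k^{(\text{fail})})^{-2}\mathbf{1}^T$ to $p_k\cdot p_k^{-2}=1/p_k$, and average over the $K$ sources. Your added remarks on invertibility and the geometric-renewal sanity check are fine but not a different argument.
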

\begin{proof}
When the sources have a single state the expressions reduce to scalar equations and
\ifdefined\SINGLECOL
\begin{align}
E[\Delta_{\text{random},k}]&=\left(\sum_{n=1}^N \bar{a}_n q_n p_{nk}^{(1)}\right)\times\left(1-\left(1-\sum_{n=1}^N \bar{a}_n q_n p_{nk}^{(1)}\right)\right)^{-2}\\
&=\left(\sum_{n=1}^N \bar{a}_n q_n p_{nk}^{(1)}\right)^{-1}.
\end{align}
\else
\begin{align}
E[\Delta_{\text{random},k}]&=\frac{\sum_{n=1}^N \bar{a}_n q_n p_{nk}^{(1)}}{\left(1-\left(1-\sum_{n=1}^N \bar{a}_n q_n p_{nk}^{(1)}\right)\right)^{2}}\\
&=\left(\sum_{n=1}^N \bar{a}_n q_n p_{nk}^{(1)}\right)^{-1}.
\end{align}
\fi
\noindent The result follows by averaging over the $K$ sources.
\end{proof}

\section{Optimal Policy with Observable States}\label{sec:policy_obs_states}
We now turn our attention to adaptive policies, i.e., policies that depend on the current system state, and we start by studying the case where the source states are observable to the scheduler. In this case, the system is fully observable and the problem can be analyzed as an average-cost MDP with unbounded costs (see e.g.,~\cite{ross83,puterman94,sennott89}). While the assumption of full observability is unrealistic in many practical scenarios, policies for this case will play a central role in the partially observable case that we consider later. We first prove that there exists a stationary policy that solves \cref{eq:centralproblem}, and we then show how such a policy can be found using relative value iteration.

\subsection{Structure of Optimal Policy}
Policies for MDPs can generally be characterized as randomized/deterministic and history dependent/stationary (Markovian)~\cite{puterman94}. Due to their simple structure, deterministic stationary policies are desired from both an analytical and practical perspective. A policy is said to be deterministic stationary if the same deterministic decision rule is used in each time slot, or, more formally, if $\bm{\Lambda}_{t_1}=\bm{\Lambda}_{t_2}$ implies that $a^*_{t_1}=a^*_{t_2}$, where $a^*_t$ is the optimal action at time $t$. Contrary to discounted MDPs, for which an optimal policy is guaranteed to be stationary, optimal policies for unbounded average-cost MDPs are in general not guaranteed to be deterministic and stationary, and may be both history dependent and stochastic~\cite{ross83,puterman94}.
One way to guarantee the existence of a deterministic stationary policy through a set of sufficient conditions provided in \cite{sennott89}, which we will show are satisfied for the problem in \cref{eq:centralproblem}. In the following paragraphs, we provide an overview of the conditions and the optimal policy, while we defer the formal proofs to \cref{app:proofoptstationarypolicy}.

The main idea behind the conditions is to show that a deterministic stationary policy for the corresponding $\alpha$-discounted problem exists in the limit as $\alpha\to 1$. To formalize, we define the value function as the total discounted cost under policy $\pi$ for a given discount factor $0<\alpha<1$ and initial state $\bm{\Lambda}$ as
\begin{equation}
    V_{\alpha,\pi}(\bm{\Lambda}) = \lim_{T\to\infty}\E\left[\sum_{t=0}^{T}\alpha^t C(\bm{\Lambda}_t,\alpha_t)\bigg|\bm{\Lambda}_0=\bm{\Lambda}\right],
\end{equation}
and let
\begin{equation}
    V_{\alpha}(\bm{\Lambda}) = \inf_{\pi}V_{\alpha,\pi}(\bm{\Lambda}).
\end{equation}
Note that $V_{\alpha}(\bm{\Lambda})\ge 0$ because all costs are positive.
The first condition, presented in \cref{app:proofoptstationarypolicy} as \cref{pro:condition1}, is that there exists a deterministic stationary policy that minimizes $V_{\alpha}(\bm{\Lambda})$. Note that this is generally not guaranteed due to the countable state space. Provided that it exists, the policy is given as
\begin{equation}
    a_t = \argmax_{a}\left\{ C(\bm{\Lambda}_t, a) + \alpha\E_{\bm{\Lambda}_{t+1}}\left[V_{\alpha}(\bm{\Lambda}_{t+1})|\bm{\Lambda}_t,a\right]\right\},
\end{equation}
and the value function satisfies the Bellman equation
\begin{equation}
    V_{\alpha}(\bm{\Lambda}) = \min_{a}\left\{ C(\bm{\Lambda}, a) + \alpha\E_{\bm{\Lambda}'}\left[V_{\alpha}(\bm{\Lambda}')|\bm{\Lambda},a\right]\right\}.
\end{equation}
A consequence of this is that the optimal discounted policy can be found using the iterative value iteration procedure~\cite{puterman94}
\begin{equation}
    V_{\alpha}^{n+1}(\bm{\Lambda}) = \min_{a}\left\{ C(\bm{\Lambda}, a) + \alpha\E_{\bm{\Lambda}'}\left[V_{\alpha}^{n}(\bm{\Lambda}')|\bm{\Lambda},a\right]\right\},\label{eq:valueiter}
\end{equation}
where $n=0,1,\ldots$ is the iteration number and $V_{\alpha}^{0}(\bm{\Lambda})$ is an arbitrary initial state. Then, $V_{\alpha}^{n+1}(\bm{\Lambda})\to V_{\alpha}(\bm{\Lambda})$ as $n\to\infty$ for any $\bm{\Lambda}$ and $0< \alpha < 1$.

Although the discounted problem converges to the optimal value function, we are ultimately interested in the undiscounted value function, i.e., with $\alpha=1$. However, \cref{eq:valueiter} diverges to $\infty$ for nonzero cost functions when $\alpha=1$. Instead, it is useful to consider the relative value function
$h_{\alpha}(\bm{\Lambda})=V_{\alpha}(\bm{\Lambda})-V_{\alpha}(\bm{\Lambda}_0)$, where $\bm{\Lambda}_0$ is an arbitrary fixed state. Using this, we define the relative value iteration procedure as
\begin{equation}
    h_{\alpha}^{n+1}(\bm{\Lambda}) = \min_{a}\left\{ C(\bm{\Lambda}, a) + \alpha\E_{\bm{\Lambda}'}\left[h_{\alpha}^{n}(\bm{\Lambda}')|\bm{\Lambda},a\right]\right\}-h_{\alpha}^{n}(\bm{\Lambda}_0),\label{eq:rel_valueiter}
\end{equation}
which can be interpreted as the total cost of state $\bm{\Lambda}$ relative to the cost of state $\bm{\Lambda}_0$. Contrary to the value iteration in \cref{eq:valueiter}, the relative value iteration procedure does not increase unboundedly for nonzero costs as $\alpha\to 1$ (provided that an optimal deterministic stationary policy exists), and it does not impact the sequence of maximizing actions~\cite{puterman94}.

The remaining two conditions required for an optimal deterministic stationary policy to exist relate to the boundedness of $h_{\alpha}(\bm{\Lambda})$. The second condition, presented in \cref{pro:condition2} of \cref{app:proofoptstationarypolicy}, states that $h_{\alpha}(\bm{\Lambda})$ must be uniformly bounded from below in both $\alpha$ and $\bm{\Lambda}$. The third condition concerns the upper bound of $h_{\alpha}(\bm{\Lambda})$. However, because the state space is countable and the cost function $C(\bm{\Lambda},a)$ is unbounded from above, $h_{\alpha}(\bm{\Lambda})$ is not uniformly upper bounded in $\bm{\Lambda}$. Instead, the third condition requires that $h_{\alpha}(\bm{\Lambda})$ is uniformly bounded from above only in $\alpha$, i.e., $h_{\alpha}(\bm{\Lambda})\le M_{\bm{\Lambda}}$ for all $\bm{\Lambda}$ and $\alpha$, \emph{and} that $\E_{\bm{\Lambda}_{t+1}}[M_{\bm{\Lambda}_{t+1}}|\bm{\Lambda}_t,a_t]< \infty$ for all $\bm{\Lambda}_t$ and $a_t$. This condition is stated in \cref{pro:condition3} in \cref{app:proofoptstationarypolicy}.

The fact that the conditions above are satisfied guarantees the existence of a deterministic stationary optimal policy. The result is summarized in the following theorem.
\begin{theorem}\label{theo:optstationarypolicy}
There exists a function $h(\bm{\Lambda})$ such that the policy
\begin{equation}
    a_t = \argmin_{a}\left\{ C(\bm{\Lambda}_t, a) + \E_{\bm{\Lambda}_{t+1}}\left[h(\bm{\Lambda}_{t+1})|\bm{\Lambda}_t,a\right]\right\}
\end{equation}
is optimal. Furthermore, for some constant $g$, $h(\bm{\Lambda})$ satisfies
\begin{equation}
    g + h(\bm{\Lambda}) = \min_{a}\left\{ C(\bm{\Lambda}, a) + \E_{\bm{\Lambda}'}\left[h(\bm{\Lambda}')|\bm{\Lambda},a\right]\right\}.\label{eq:reloptimalityeq}
\end{equation}
\end{theorem}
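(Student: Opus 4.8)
The statement to prove is Theorem~\ref{theo:optstationarypolicy}, asserting the existence of a relative value function $h(\bm{\Lambda})$ and a constant $g$ satisfying the average-cost optimality equation \eqref{eq:reloptimalityeq}, together with the optimality of the greedy policy it induces. The natural route, already signalled in the text, is to verify the three sufficient conditions of Sennott~\cite{sennott89} (stated in the appendix as Propositions~\ref{pro:condition1}--\ref{pro:condition3}) and then invoke Sennott's theorem to obtain \eqref{eq:reloptimalityeq} and the stationary optimal policy directly. So the plan is: (i) establish Condition~1 --- existence of an optimal deterministic stationary policy for each $\alpha$-discounted problem; (ii) establish Condition~2 --- a uniform-in-$\alpha$, uniform-in-$\bm{\Lambda}$ lower bound on $h_\alpha$; (iii) establish Condition~3 --- a (state-dependent but $\alpha$-independent) upper bound $M_{\bm{\Lambda}}$ on $h_\alpha(\bm{\Lambda})$ with $\E_{\bm{\Lambda}'}[M_{\bm{\Lambda}'}\mid\bm{\Lambda},a]<\infty$; (iv) conclude by citing the Sennott result.

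For Condition~1, the obstacle is that the state space $\{1,\dots,S_1\}\times\cdots\times\{1,\dots,S_K\}\times\mathbb{Z}_{\ge1}^{K}$ is countably infinite with unbounded per-step cost, so the standard finite-state discounted-MDP argument does not apply off the shelf. I would argue that although the AoI components are unbounded, the cost $C(\bm{\Lambda},a)$ from \eqref{eq:costfun} grows only linearly in the $\Delta_k$, the transition kernel is "well-behaved" (the AoI either resets to $1$ or increments by $1$, and the source-state part is a fixed finite Markov chain independent of the action), and --- crucially --- from any state there is a policy (e.g. round-robin over sensors) with finite expected discounted and average cost, using the standing assumption $\sum_n\sum_s q_n p_{nk}^{(s)}>0$. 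Under these structural facts one can invoke a standard result on negative/positive-cost countable-state discounted MDPs (e.g.~\cite{puterman94}) guaranteeing that $V_\alpha$ is finite, satisfies the discounted Bellman equation, and that the minimizing action exists (the $\min$ over the finite action set $\{1,\dots,N\}$ is always attained), yielding a deterministic stationary $\alpha$-optimal policy.

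For Condition~2, the lower bound is easy and essentially free: since all costs are nonnegative, $V_\alpha(\bm{\Lambda})\ge 0$ for all $\bm{\Lambda}$ and $\alpha$, and in fact $C(\bm{\Lambda},a)\ge 1$ for every state and action (the AoI of every source is at least $1$ after any update, so the per-slot average AoI is $\ge 1$); hence $h_\alpha(\bm{\Lambda})=V_\alpha(\bm{\Lambda})-V_\alpha(\bm{\Lambda}_0)\ge -V_\alpha(\bm{\Lambda}_0)$, and one then bounds $V_\alpha(\bm{\Lambda}_0)$ uniformly in $\alpha$ by evaluating any fixed policy (round-robin) from the reference state $\bm{\Lambda}_0$, whose discounted cost is at most its average cost divided by... more carefully, bounded by a geometric-type sum that is monotone in $\alpha$ and finite at $\alpha=1$. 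That gives a single constant lower bound $-L$ valid for all $\alpha\in(0,1)$ and all $\bm{\Lambda}$.

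The genuinely technical step --- and the one I expect to be the main obstacle --- is Condition~3, the $\alpha$-independent upper bound $h_\alpha(\bm{\Lambda})\le M_{\bm{\Lambda}}$ with the integrability requirement on $M_{\bm{\Lambda}}$. The standard device is to bound $h_\alpha(\bm{\Lambda})$ by the expected discounted cost incurred by a \emph{reference policy} that drives the chain from $\bm{\Lambda}$ to $\bm{\Lambda}_0$ (or to a fixed recurrent set), i.e. $h_\alpha(\bm{\Lambda})\le \E\big[\sum_{t=0}^{\tau-1}\alpha^t\,C(\bm{\Lambda}_t,a_t)\,\big|\,\bm{\Lambda}_0=\bm{\Lambda}\big]$ for a suitable stopping time $\tau$ hitting that set; bounding this by the $\alpha=1$ version gives an $M_{\bm{\Lambda}}$ that depends only on $\bm{\Lambda}$. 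Here the hitting-time argument needs care because the source states and the AoIs must both be synchronized, and because the AoI cost accumulated during the excursion grows with the initial $\Delta_k$; one shows $M_{\bm{\Lambda}}$ is polynomial (roughly quadratic) in $\max_k\Delta_k$ with coefficients governed by the mixing time of the source chains and by $\min_{n,k}q_n p_{nk}^{(s)}$ over observable states. The finiteness of $\E_{\bm{\Lambda}'}[M_{\bm{\Lambda}'}\mid\bm{\Lambda},a]$ then follows because a one-step transition changes each $\Delta_k$ by at most $+1$ (or resets it), so $M_{\bm{\Lambda}'}$ is bounded deterministically given $\bm{\Lambda}$, making the expectation trivially finite. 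Once all three conditions hold, Sennott's theorem delivers $h=\lim_{\alpha\to1}h_\alpha$ (along a subsequence), the constant $g=\lim_{\alpha\to1}(1-\alpha)V_\alpha(\bm{\Lambda}_0)$ equal to the optimal average cost, the optimality equation \eqref{eq:reloptimalityeq}, and the optimality of the induced stationary deterministic policy, completing the proof.
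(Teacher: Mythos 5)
Your overall strategy is the same as the paper's: verify Sennott's three sufficient conditions (the paper's \cref{pro:condition1,pro:condition2,pro:condition3}) and invoke his theorem, with Conditions~1 and~3 handled, as in the paper, by exhibiting a reference stationary policy (random/round-robin scheduling) whose induced chain is irreducible, aperiodic and positive recurrent with finite expected first-passage costs. That part of your plan is sound.

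However, your argument for Condition~2 has a genuine gap. You bound $h_{\alpha}(\bm{\Lambda})=V_{\alpha}(\bm{\Lambda})-V_{\alpha}(\bm{\Lambda}_0)\ge -V_{\alpha}(\bm{\Lambda}_0)$ and then claim that $V_{\alpha}(\bm{\Lambda}_0)$ can be bounded uniformly in $\alpha$ by evaluating a fixed policy. This is false: the per-slot cost satisfies $C(\bm{\Lambda},a)\ge 1$ for every state and action, so under \emph{any} policy $V_{\alpha}(\bm{\Lambda}_0)\ge \sum_{t\ge 0}\alpha^{t}=1/(1-\alpha)\to\infty$ as $\alpha\to 1$; the ``geometric-type sum finite at $\alpha=1$'' does not exist. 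The whole point of working with the \emph{relative} value function is that the divergent part cancels, and that cancellation has to be exhibited, not assumed. The paper does this via an ingredient your proposal omits entirely: the monotonicity lemma (\cref{pro:nonnegative}), proved by induction on value iteration, showing $V_{\alpha}(\bm{\Lambda})$ is nondecreasing in each $\Delta_k$. Choosing $\bm{\Lambda}_0=(\bm{S}_0,(1,\ldots,1))$, monotonicity gives $V_{\alpha}((\bm{S},\bm{\Delta}))\ge V_{\alpha}((\bm{S},(1,\ldots,1)))$ for every $\bm{\Delta}$, which reduces the problem to the finitely many states $(\bm{S},(1,\ldots,1))$; for these one bounds $V_{\alpha}(\bm{\Lambda}_0)-V_{\alpha}((\bm{S},(1,\ldots,1)))$ by the expected (undiscounted, hence $\alpha$-independent) cost of a first passage driven by the autonomous source-state chain, and takes $N$ as the maximum over the finite set of source states. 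Without the monotonicity lemma and this choice of reference state, Condition~2 does not go through, so as written your proof is incomplete at this step.
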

\begin{proof}
See \cref{app:proofoptstationarypolicy}.
\end{proof}

\subsection{Relative Value-Iteration}
As mentioned previously, the regular value iteration in \cref{eq:valueiter} diverges for $\alpha=1$, and so $h(\bm{\Lambda})$ is often computed directly using relative value iteration in \cref{eq:rel_valueiter}. However, relative value iteration is not directly applicable to our problem due to the infinite state space. Instead, we resort to an approximate solution and truncate the state space. This may be justified by the fact that some states are very unlikely to be reached, or because the AoI does not impose a higher penalty once it is above a certain threshold.
More specifically, we consider the $Q$-truncated AoI obtained by modifying the dynamics in \eqref{eq:aoiprocess} as
\begin{equation}
\Delta^{Q}_k(t+1)=
\begin{cases}
  1 & \text{if }\sum_n\zeta_{nk}(t)u_{n}(t)=1\\
  \min(Q, \Delta_k(t)+1) & \text{otherwise.}
\end{cases}
\end{equation}
In particular, because the unbounded system is stable and the cost function is nondecreasing in the AoI of each source, $\Lambda_k$, (see \cref{pro:nonnegative} in \cref{app:proofoptstationarypolicy}), we can achieve an arbitrary good approximation using a sufficiently large $Q$. However, a large value of $Q$ comes at the cost of an increased computational complexity of relative value iteration.
The value function for this problem, $h(\bm{\Lambda})$, can be found using relative value iteration~\cite{puterman94}, provided that the MDP is unichain, i.e., that every deterministic stationary policy has a single recurrent class plus a possibly empty class set of transient states. This is indeed the case for this problem, as can be seen by considering the state $\bm{\Lambda}=(\bm{S}_0,(Q,Q,\ldots,Q))$, which can be reached from all states with nonzero probability for any $\bm{S}_0$. The relative value iteration is summarized in \cref{lst:relvalueit}. 

It can be shown~\cite{bertsekas05_part2} that $h^{n}(\bm{\Lambda})$ satisfies $\underset{\bar{}}{\delta}^{n}\le \underset{\bar{}}{\delta}^{n+1}\le g \le \bar{\delta}^{n+1}\le \bar{\delta}^{n}$, where $g$ is the constant in \cref{eq:reloptimalityeq}, and
\begin{align}
    \underset{\bar{}}{\delta}^{n}&=\min_{\bm{\Lambda}} \{h^{n+1}(\bm{\Lambda})-h^{n}(\bm{\Lambda})\},\\
    \bar{\delta}^{n}&=\max_{\bm{\Lambda}} \{h^{n+1}(\bm{\Lambda})-h^{n}(\bm{\Lambda})\}.
\end{align}
Using this, an $\epsilon$-optimal approximation of the value function can be obtained by using the convergence criterion $\bar{\delta}^{n}-\underset{\bar{}}{\delta}^{n}< \epsilon$, which is guaranteed to be met in a finite number of iterations.

\begin{algorithm}
  \caption{Relative value iteration.}\label{lst:relvalueit}
\begin{algorithmic}
  \algrenewcommand\algorithmicfor{\textbf{For}}
  \algrenewtext{EndFor}{\textbf{End for}}
  \State Pick $\bm{\Lambda}_0$ arbitrarily
  \State $h^{0}(\bm{\Lambda})\gets 0,\quad\forall\bm{\Lambda}$
  \For{$n=0,1,\ldots$ until convergence}
  \State $\lambda^{n+1}\gets \min_{a}\left\{ C(\bm{\Lambda}_0, a) + \E_{\bm{\Lambda}'}\left[h^{n}(\bm{\Lambda}')|\bm{\Lambda}_0,a\right]\right\}$
  \ifdefined\SINGLECOL
  \State $h^{n+1}(\bm{\Lambda})\gets \min_{a}\left\{ C(\bm{\Lambda}, a) + \E_{\bm{\Lambda}'}\left[h^{n}(\bm{\Lambda}')|\bm{\Lambda},a\right]\right\} -\lambda^{n+1},\quad \forall \bm{\Lambda}$
  \else
  \State $h^{n+1}(\bm{\Lambda})\gets \min_{a}\left\{ C(\bm{\Lambda}, a) + \E_{\bm{\Lambda}'}\left[h^{n}(\bm{\Lambda}')|\bm{\Lambda},a\right]\right\}$
  \State $\qquad\qquad\qquad\qquad -\lambda^{n+1},\quad \forall \bm{\Lambda}$
  \fi
  \EndFor
\end{algorithmic}
\end{algorithm}

\subsection{Approximate Policies}
Each epoch in the value iteration algorithm requires a pass over all states, actions, and possible future states. With a state space of size $S_1\times S_2\times\ldots\times S_K\times Q^K$, this is infeasible even for small scenarios.
The problem of large dimensionality has been addressed in several ways in the literature, most notably by computing the value function for only a subset of the states, and then use those values to fit an approximate function using e.g., a decision tree or a neural network, or by using reinforcement learning methods to approximate the optimal policy in some high dimensional feature space~\cite{bertsekas05_part2, sutton18}. While such methods often perform well in practice, they are typically very problem dependent and their convergence properties are not well understood. Another strategy that has frequently been applied to AoI problems is to frame the problem as a restless multi-armed bandit problem (e.g., in~\cite{hsu2017}), for which good heuristic methods exist, such as the Whittle index policy~\cite{whittle90}. To frame the problem as a restless multi-armed, the problem must be separable into independent sub-problems, such as the scheduling of individual sources or sensors. Since the sensors in the system that we consider can observe multiple sources and each source can be observed by multiple sensors, such a separation is not possible.

Instead, as we are mainly concerned with how the common observation model impacts the AoI, we resort to the myopic policy when the state space is too large to allow the value function to be computed. The myopic policy schedules the sensors with the lowest immediate cost, i.e.,
\begin{equation}\label{eq:myopic}
    a_t = \argmin_{a} \left\{C(\bm{\Lambda}_t, a)\right\}.
\end{equation}
The myopic policy is generally sub-optimal, and as we illustrate in \cref{sec:numres} the memory in the AoI process makes the myopic policy sub-optimal even for very simple scenarios, such as one in which the sources only have a single state, i.e., $S_k=1$ for all $k$. Nevertheless, as presented in the following proposition, it turns out that the myopic policy is optimal when all of the following two conditions are satisfied: (1) $S_k=1$ for all $k$, (2) each sensor observes exactly one source with nonzero probability $p$, i.e., for all sensors $n=1,2,\ldots,N$, $p_{nk}^{(1)}=p$ for some $k=k_n'$ and $p_{nk}^{(1)}=0$ for $k\neq k_n'$.

\begin{proposition}\label{prop:myopicoptimal}
    The myopic policy
    \begin{equation}
        a_t = \argmin_{a}\left\{ C(\bm{\Lambda}_t, a)\right\}
    \end{equation}
    is optimal if the sources have a single state, i.e., $S_k=1$ for all $k$, each sensor observes exactly one source with probability $p$ and all other sources with probability $0$, and the channel error probabilities are equal, i.e., $q_n=q$ for all $n$.
\end{proposition}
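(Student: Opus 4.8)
The plan is to reduce the setting to a fully symmetric ``which source to refresh'' problem and then prove optimality of the greedy rule by a sample-path coupling argument. Under the stated hypotheses, for the scheduled sensor $n$ the variable $\zeta_{nk}(t)$ is Bernoulli with parameter $p_{nk}^{(s_k)}q_n$, which equals $\rho:=pq$ if $k=k_n'$ and $0$ otherwise; moreover $0<\rho\le1$, and by the standing assumption $\sum_n\sum_s q_n p_{nk}^{(s)}>0$ every source $k$ coincides with $k_n'$ for at least one sensor $n$. Hence scheduling sensor $n$ amounts to \emph{attempting} source $k_n'$: that source is reset to AoI $1$ with probability $\rho$ and is otherwise incremented, while every other source is incremented. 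No action attempts more than one source, and since $S_k=1$ the source-state part of $\bm{\Lambda}$ is constant. Consequently the MDP is invariant under any relabeling of the $K$ sources, and the myopic rule $a_t=\argmin_a C(\bm{\Lambda}_t,a)=\argmax_n\Delta_{k_n'}$ simply attempts a source whose current AoI equals $\max_k\Delta_k$.

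I would then fix an arbitrary policy $\sigma\in\Pi$ and couple the myopic AoI process $\{\bm{\Delta}^{\mathrm m}(t)\}$ with the $\sigma$-controlled process $\{\bm{\Delta}^{\sigma}(t)\}$ by letting both processes use, in each slot $t$, the \emph{same} Bernoulli($\rho$) variable $B_t$ to decide whether that slot's attempt succeeds; this is a valid coupling because every attempt succeeds with probability exactly $\rho$. The heart of the argument is the invariant that, at every time $t$, the sorted AoI vector of the myopic process is coordinatewise dominated by the sorted AoI vector of the $\sigma$-process. Since $\sum_k\Delta_k$ equals the sum of the order statistics, this invariant gives $\tfrac1K\sum_k\Delta^{\mathrm m}_k(t)\le\tfrac1K\sum_k\Delta^{\sigma}_k(t)$ along every sample path, hence also in expectation for all $t$, so the long-term average cost in \cref{eq:centralproblem} under the myopic policy does not exceed that under $\sigma$; as $\sigma$ was arbitrary, the myopic policy is optimal.

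To establish the invariant I would induct on $t$. For $t=0$ both vectors equal $\bm 0$, and at $t=1$ every coordinate of both vectors equals $1$ irrespective of $B_0$, so the base case holds with equality and all coordinates are $\ge1$ from $t=1$ on. For the inductive step with $t\ge1$, write the sorted vectors as $a_1\le\cdots\le a_K$ and $b_1\le\cdots\le b_K$, where $a_i\le b_i$ for all $i$ and $a_i\ge1$. If $B_t=0$, both processes increment every coordinate and the invariant is preserved. If $B_t=1$, the myopic process resets a coordinate equal to $a_K$, so its new sorted vector is $(1,a_1+1,\ldots,a_{K-1}+1)$ (using $a_1\ge1$), while the $\sigma$-process resets some coordinate equal to $b_r$, so its new sorted vector is $(1,c_1+1,\ldots,c_{K-1}+1)$, where $(c_1,\ldots,c_{K-1})$ is the sorted tuple obtained by deleting $b_r$ from $(b_1,\ldots,b_K)$. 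Deleting one entry of a sorted tuple can only raise its order statistics, i.e.\ $c_i\ge b_i\ge a_i$ for every $i$, so the new myopic sorted vector is coordinatewise at most the new $\sigma$ sorted vector, which closes the induction.

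I expect the only subtle point to be this last step, namely checking that sorted dominance survives precisely when $\sigma$ refreshes a source whose AoI is \emph{not} the current maximum; the rest is bookkeeping, including the minor care around the initialization $\Delta_k(0)=0$. As an alternative route one could invoke \cref{theo:optstationarypolicy} directly: show by induction on relative value iteration that a relative value function $h(\bm{\Lambda})$ can be taken symmetric in the source AoIs and nondecreasing in each $\Delta_k$, and then observe that resetting the source with the largest AoI simultaneously minimizes the one-step cost $C(\bm{\Lambda},a)$ and, because it produces the coordinatewise-smallest sorted AoI vector among all single resets, also minimizes $\E[h(\bm{\Lambda}')\mid\bm{\Lambda},a]$; the myopic action thus attains the minimum in \cref{eq:reloptimalityeq} and is optimal.
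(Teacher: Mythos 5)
Your proof is correct. The paper itself does not give a standalone argument: it disposes of the proposition in one line by observing that it is a special case of Theorem~5 of Kadota et al.\ \cite{katoda18} (greedy optimality in symmetric broadcast networks, with weights $a=1$). You instead give a direct, self-contained sample-path proof, which is essentially the classical argument behind that cited theorem transplanted into this paper's notation: reduce each action to an ``attempt'' on the unique source $k_n'$ it can observe, couple the myopic and arbitrary-policy trajectories through a single shared Bernoulli($pq$) success indicator per slot, and propagate coordinatewise dominance of the sorted AoI vectors by induction. All the steps check out — the base case at $t=1$ where every coordinate equals $1$, the deletion inequality $c_i\ge b_i$ for order statistics, and the passage from pathwise dominance of $\sum_k\Delta_k(t)$ to dominance of the limsup time-average over all history-dependent randomized policies. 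What your route buys over the paper's citation is that it makes explicit exactly where each hypothesis is used: the equal $p$ and equal $q$ are what make the shared-Bernoulli coupling faithful (with heterogeneous success probabilities the coupling, and the result, break down), and the standing assumption $\sum_n\sum_s q_n p_{nk}^{(s)}>0$ is what makes $n\mapsto k_n'$ surjective so that the myopic action genuinely targets a source of globally maximal AoI. Your alternative sketch via \cref{theo:optstationarypolicy} is also sound, since a symmetric, coordinatewise-nondecreasing $h$ is automatically monotone with respect to the sorted-dominance order, but the coupling argument is the cleaner and more complete of the two.
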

\begin{proof}
This is a special case of~\cite[Theorem 5]{katoda18} with weight $a=1$.
\end{proof}

\section{Policies for Unobservable States}\label{sec:policy_unobs_states}
In this section, we consider the problem where part of the system states are unobservable and need to be inferred through the measurements.
We consider two extreme instances of the problem. First, we consider the case where the sources, but not their states, can be detected from a measurement with a fixed delay of $\tau\ge 0$ time slots. This delay can, for example, represent communication delay and processing time of a cloud-based image recognition process in the camera scenario. Such processing is typically much slower than the scheduling interval (i.e., scheduling can happen every few milliseconds whereas cloud processing can be in the order of tens to hundreds of milliseconds). We also discuss how to handle the situation in which the source state is revealed by the measurements, as in the camera scenario when the states represent the physical location of the sources.
In the second instance, we assume that the sources cannot be detected based on the sensor measurement, which is for instance the case for a camera without image recognition, or touch sensors, where several sources can press the sensor, but the sensor reading does not reveal the specific source. In contrast to the first instance where only the source states are unobservable, in this case the source AoIs are also unobservable.
In both cases, we assume that the states and transitions probabilities are known. Sometimes, this information may not be available and will need to be learned using e.g., reinforcement learning methods. However, such approaches are beyond the scope of this paper.

The unobservable states lead to a partially observable Markov decision process (POMDP), in which the scheduler needs to keep track of the information obtained through the measurements. This information can be represented as a vector that contains all the previous observations and actions, and the instances at which they were taken/observed. By treating the information vector as part of the state, the POMDP becomes a fully observable MDP. However, the dimension of the information vector increases in every time slot, which makes the problem difficult to solve. It turns out that a probability distribution over the states, referred to as the belief states, $b(\bm{\Lambda})$, is a sufficient statistic~\cite{bertsekas05_part1}. By including the belief state in the system state and redefining the expected cost as
\begin{equation}\label{eq:pomdpcost}
  \bar{C}(\bm{\Lambda},a)=\sum_{\bm{\Lambda}}b(\bm{\Lambda})C(\bm{\Lambda},a),
\end{equation}
we can obtain an MDP that can be solved using relative value iteration as outlined in \cref{sec:policy_obs_states} (provided that a stationary policy exists). Alternatively, the cost function can be incorporated directly into the myopic policy in~\eqref{eq:myopic} if the size of the state space renders value iteration intractable. In the remainder of this section, we first derive the belief state of the two considered instances, and then conclude the section by discussing approximate solutions to the POMDP, which are usually necessary due to the state space expansion caused by the inclusion of the belief vector.
In the following, we will make use of the fact that the source AoIs and states are conditionally independent, and factor the belief vector into beliefs for each source, $\Lambda_k=(s_k,\Delta_k)$, as
\begin{equation}
  b(\bm{\Lambda})=\prod_{k=1}^K b_{k}(\Lambda_k).
\end{equation}

\subsection{Detectable Sources with Delay}
For the case when the sources are detectable from the measurements with a delay $\tau\ge 0$, we define the observation at time $t$ as $\bm{O}^{(t)}=(\mathcal{Q}^{(t)},\zeta^{(t-\tau)}_{1},\ldots,\zeta^{(t-\tau)}_{K})$, where $\mathcal{Q}^{(t)}=1$ indicates successful transmission from the source to the destination at time $t$, otherwise $\mathcal{Q}^{(t)}=0$, and $\zeta^{(t-\tau)}_{k}=1$ if source $k$ was observed at time $t-\tau$, otherwise zero. Note that $\zeta^{(t-\tau)}_{1},\ldots,\zeta^{(t-\tau)}_{K}$ are delayed observations while $\mathcal{Q}^{(t)}$ is instantaneous.

Due to the delayed observations, we maintain in the beginning of each time slot $t$ the belief of the state in the beginning of time $t-\tau$, which we denote by $\tilde{b}_{k}^{(t)}(\Lambda_k^{(t-\tau)})$, using the observations received up until and including time $t-1$. To predict the system state belief at time $t$, denoted by $b_{k}^{(t)}(\Lambda_k^{(t)})$ and needed for the scheduling decision, we then evolve the state belief from time $t-\tau$ to $t$ using the state dynamics and the transmission outcomes $\mathcal{Q}^{(t-\tau)},\ldots,\mathcal{Q}^{(t-1)}$, which are all known at time $t$.

We start by deriving the belief update rule for $\tilde{b}_{k}^{(t)}(\Lambda_k^{(t-\tau)})$. Because we know at time $t$ the sources that were observed at times $1,2,\ldots,t-\tau-1$, we know exactly their AoI $\Delta_k$ at time $t-\tau$. On the other hand, we have uncertainty about the source states $s_k$, which are not directly observed, but need to be inferred through the observations. Thus, to simplify the notation we limit the focus to the source states and assume that the belief is only evaluated at the actual AoI, in which case the belief of the state can be expressed as
\begin{align}
  &\tilde{b}^{(t)}_k\left(s_k^{(t-\tau)}\right)\nonumber\\
  &\quad=\Pr\left(s_k^{(t-\tau)}|\bm{O}^{(t-\tau-1:t-1)},a^{(t-\tau-1)},\tilde{b}^{(t-1)}_{k}\right)\nonumber\\
    &\quad=\sum_{s_k^{(t-\tau-1)}=1}^{S_k}\Pr\left(s_k^{(t-\tau)}|s_k^{(t-\tau-1)}\right)\nonumber\\
      &\quad\qquad\times\Pr\left(s_k^{(t-\tau-1)}|\bm{O}^{(t-\tau-1:t-1)},a^{(t-\tau-1)},\tilde{b}^{(t-1)}_{k}\right)\nonumber\\
      &\quad\propto\sum_{s_k^{(t-\tau-1)}=1}^{S_k}\Pr\left(s_k^{(t-\tau)}|s_k^{(t-\tau-1)}\right)\nonumber\\
      &\quad\qquad\times\Pr\left(\mathcal{Q}^{(t-\tau-1)},\zeta_{k}^{(t-\tau-1)}|a^{(t-\tau-1)},s_k^{(t-\tau-1)}\right)\nonumber\\
      &\quad\qquad\times \tilde{b}^{(t-1)}_k\left(s_k^{(t-\tau-1)}\right),\label{eq:b_t_evolution}
\end{align}
where the last step follows from Bayes' theorem and we used the notation $x^{(a:b)}=(x^{(a)},x^{(a+1)},\ldots,x^{(b)})$. The transition probabilities are given as $\Pr(s_k^{(t-\tau)}|s_k^{(t-\tau-1)})=r_{ij}^{(k)}$ for $s_k^{(t-\tau)}=j$ and $s_k^{(t-\tau-1)}=i$, and the observation probabilities can be computed as
\ifdefined\SINGLECOL
\begin{align}
  \Pr\left(\mathcal{Q},\zeta_{k}|a,s_k\right)&=
    \begin{cases}
      q_{a}\Big[ \zeta_{ak}p_{ak}^{(s_k)}+(1-\zeta_{k})(1-p_{ak}^{(s_k)})\Big] & \text{if } \mathcal{Q}=1,\\
      1-q_{a} & \text{otherwise,}
    \end{cases}
\end{align}
\else
\begin{align}
  \Pr\left(\mathcal{Q},\zeta_{k}|s_k,a\right)&=
    \begin{cases}
      \!\begin{aligned}
      &q_{a}\Big[ \zeta_{ak}p_{ak}^{(s_k)}\\
      &\quad+(1-\zeta_{ak})(1-p_{ak}^{(s_k)})\Big]
      \end{aligned} & \text{if } \mathcal{Q}=1,\\
      1-q_{a} & \text{otherwise,}
    \end{cases}
\end{align}
\fi
where we omitted the time indices for clarity.

Using this belief, we can compute the belief of the state in the beginning of time $t$ using observations $\bm{O}^{(t-\tau:t-1)}$, which we will denote by $b_{k}^{(t)}(\Lambda_k^{(t)})$. Note that if $\tau=0$ this distribution reduces to $b_{k}^{(t)}(\Lambda_k^{(t)})=\tilde{b}_{k}^{(t)}(\Lambda_k^{(t)})$. For $\tau > 0$, we can derive $b_{k}^{(t)}(\Lambda_k^{(t)})$ by exploiting the Markovian structure of the problem and factorize the distribution as
\begin{align}
    b_{k}^{(t)}\left(\Lambda_k^{(t)}\right)&=
    \sum_{\Lambda_{k}^{(t-1)}}\cdots \sum_{\Lambda_{k}^{(t-\tau)}}
    \tilde{b}^{(t)}_{k}\left(\Lambda_{k}^{(t-\tau)}\right)\nonumber\\
    &\qquad\prod_{i=1}^{\tau}\Pr\left(\Lambda_k^{(t-i+1)}|\mathcal{Q}^{(t-i)},a^{(t-i)},\Lambda_k^{(t-i)}\right)\nonumber\\
  \begin{split}
    &\propto
    \sum_{\Lambda_{k}^{(t-1)}}\cdots \sum_{\Lambda_{k}^{(t-\tau)}}
    \tilde{b}^{(t)}_{k}\left(\Lambda_{k}^{(t-\tau)}\right)\\
    &\qquad\prod_{i=1}^{\tau} \Pr\left(\mathcal{Q}^{(t-i)}|a^{(t-i)},\Lambda_k^{(t-i)}\right)\\
    &\qquad\quad\times\Pr\left(\Lambda_{k}^{(t-i+1)}|a^{(t-i)},\Lambda_{k}^{(t-i)}\right).
  \end{split}
\end{align}
Here, $\Pr(\mathcal{Q}^{(t-i)}=1|a^{(t-i)}\Lambda_k^{(t-i)})=q_{a^{(t-i)}}$ and $\Pr(\mathcal{Q}^{(t-i)}=0|a^{(t-i)}\Lambda_k^{(t-i)})=1-q_{a^{(t-i)}}$, and
\begin{align}
  \Pr(\Lambda_k^{(t-i+1)}|a^{(t-i)},\Lambda_k^{(t-i)})=q_{a^{(t-i)}} p_{a^{(t-i)}k}^{(s)}r_{ss'}^{(k)}
\end{align}
when $s_k^{(t-i+1)}=s'$, $s_k^{(t-i)}=s$, and $\Delta_k^{(t-i+1)}=1$;
\begin{align}
  \Pr(\Lambda_k^{(t-i+1)}|a^{(t-i)},\Lambda_k^{(t-i)})=(1-q_{a^{(t-i)}}p_{a^{(t-i)}k}^{(s)})r_{ss'}^{(k)}
\end{align}
when $s_k^{(t-i+1)}=s'$, $s_k^{(t-i)}=s$, $\Delta_k^{(t-i+1)}=\Delta_k^{(t-i)}+1$; and otherwise $\Pr(\Lambda_k^{(t-i+1)}|a^{(t-i)},\Lambda_k^{(t-i)})=0$.

Note that in some applications it may be natural to assume that a source observation reveals the true state of a source. This scenario can be captured by using the alternative update rule $\tilde{b}_t^k(s_k^{(t-\tau)})=\Pr(s_k^{(t-\tau)}|s_k^{(t-\tau-1)})$ when source $k$ is observed (and the one in \cref{eq:b_t_evolution} when source $k$ is not observed), where $s_k^{(t-\tau-1)}$ is the true state of source $k$ revealed at time $t-1$.

\cref{fig:beliefexample} shows an example of how the belief evolves for $\tau=0$ for a single source that moves around in a grid. The solid red circle denotes  the source location (state). In each time slot one of the sensors is scheduled (empty circle), chosen from the ones that can observe a single state with probability $1$. Initially, at $t=0$, the belief is uniform over all source states. As time passes, the belief changes as indicated at $t=t_0$, where the scheduled sensor observes the source. The  belief is updated at  $t=t_0+1$, reflecting the knowledge obtained at $t=t_0$, and again at $t=t_0+2$.

\begin{figure}
    \centering
    \begin{tikzpicture}[tight background]
\begin{groupplot}[group style={
\ifdefined\SINGLECOL
group size= 4 by 1,
horizontal sep=3cm,
\else
group size= 4 by 1,
horizontal sep=0.5cm,
\fi
ylabels at=edge left,
vertical sep=0.75cm,
every plot/.style={%
every axis title shift=0pt,
colormap name=viridis
}},
ymin=-0.5,
ymax=7.5,
xmin=-0.5,
xmax=7.5,
height=3.25cm,width=3.25cm,
scaled y ticks=false,
yticklabel style={
  /pgf/number format/fixed,
  /pgf/number format/precision=3
},
try min ticks=2,
]
\ifdefined\SINGLECOL
\newcommand*{\varxshift}{-2.5cm}
\else
\newcommand*{\varxshift}{0}
\fi
\nextgroupplot[title={$t=0$},font=\footnotesize,point meta min=0.015,point meta max=0.017]
\addplot[matrix plot,point meta=explicit,mesh/cols=8,mesh/rows=8] table [col sep=comma,meta index=2] {data/belief_example_0.csv};
\draw[color=red,fill] (axis cs:0,5) circle[radius=0.3];
\draw[color=red,ultra thick] (axis cs:0,3) circle[radius=0.5];
\nextgroupplot[title={$t=t_0$},font=\footnotesize,point meta min=0.015,point meta max=0.017,xshift=\varxshift]
\addplot[matrix plot,point meta=explicit,mesh/cols=8,mesh/rows=8] table [col sep=comma,meta index=2] {data/belief_example_1.csv};
\draw[color=red,fill] (axis cs:3,2) circle[radius=0.3];
\draw[color=red,ultra thick] (axis cs:3,2) circle[radius=0.5];
\nextgroupplot[title={$t=t_0+1$},font=\footnotesize,point meta min=0.0,point meta max=0.2]
\addplot[matrix plot,point meta=explicit,mesh/cols=8,mesh/rows=8] table [col sep=comma,meta index=2] {data/belief_example_2.csv};
\draw[color=red,fill] (axis cs:3,3) circle[radius=0.3];
\draw[color=red,ultra thick] (axis cs:3,1) circle[radius=0.5];
\nextgroupplot[title={$t=t_0+2$},font=\footnotesize,point meta min=0.0,point meta max=0.2,xshift=\varxshift]
\addplot[matrix plot,point meta=explicit,mesh/cols=8,mesh/rows=8] table [col sep=comma,meta index=2] {data/belief_example_3.csv};
\draw[color=red,fill] (axis cs:3,3) circle[radius=0.3];
\draw[color=red,ultra thick] (axis cs:3,2) circle[radius=0.5];

\end{groupplot}

\end{tikzpicture}
    \caption{Example of how the (normalized) belief evolves for $\tau=0$ for a single source that moves around in a grid as various sensors (squares) are scheduled.}%
    \label{fig:beliefexample}
\end{figure}
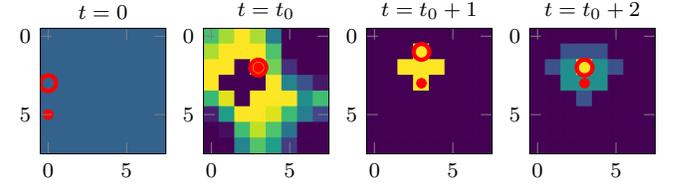

\subsection{Undetectable Sources}
In some cases, the scheduler may not be able to observe the sources from the sensor readings, e.g., if they are complex to extract or if the sensor data is encrypted. When sensor measurements reveal nothing about the sources to the scheduler, the source state belief remains constant, but the scheduler needs instead to keep track of its belief over the AoIs. Since the source states are unobservable, we assume they are in stationarity, and denote by $\beta_k^{(s_k)}$ the steady-state probability that source $k$ is in state $s_k$. In this regime we can assume that each source has a single state with observation probabilities
\begin{equation}
    \hat{p}_{nk}=\sum_{s=1}^{S_k} \beta_k^{(s)} p_{nk}^{(s)}.
\end{equation}
The AoI belief can then be constructed as follows.
\begin{equation}
    b(\bm{\Delta})=\prod_{k=1}^{K}b^k(\Delta_k),
\end{equation}
with update rule
\begin{align}
    b_t^k(\Delta_k')&=\Pr(\Delta_k'|\bm{O},a,b_{t-1}^k)\nonumber\\
    &=\begin{cases}    
    b_{t-1}^k(\Delta_k'-1)(1-q\hat{p}_{ak}) & \text{for } \Delta_k' > 1,\\
    q\hat{p}_{ak}& \text{for } \Delta_k' = 1.
    \end{cases}
\end{align}

In addition to the two considered cases, there is a third case in which objects are detectable, but they are all identical so only the number of observed sources is revealed. We omit this case, as it is  straightforward to derive the belief update rule.

\subsection{Approximate POMDP Policies}
Because $b(\bm{\Lambda})$ is constructed from a countable MDP, the state space of the POMDP is also countable. Unfortunately, the conditions used in \cref{sec:policy_obs_states} to prove the existence of a optimal stationary policy are not straightforward to apply here since there is no natural ordering of the belief states. Furthermore, alternative sufficient conditions such as the ones given in~\cite{fernandez91, arapostathis93} are difficult to prove for the general source state model that we consider. Finally, relative value iteration is challenging due to the state augmentation caused by the belief vector.
As a result, one often has to resort to sub-optimal policies based on the value function obtained in \cref{sec:policy_obs_states} for the observable state space using the two methods outlined below. Note that both methods can be applied in a myopic fashion as well by discarding the expectation term $\E_{\bm{\Lambda}_{t+1}}[\cdot]$.

\subsubsection{Maximum Likelihood (ML) Policy}
One way to apply the value functions for the fully observable MDP to the POMDP is to extract the maximum likelihood (ML) state from the belief states, and act as if it was the true state~\cite{cassandra98}. Thus, the partial observability is hidden from the agent, which picks the action
\begin{equation}
    a_t = \argmin_{a}\left\{ C(\bm{\Lambda}_t^{\text{ML}}, a) + \E_{\bm{\Lambda}_{t+1}}\left[h(\bm{\Lambda}_{t+1})|\bm{\Lambda}_t^{\text{ML}},a\right]\right\},
\end{equation}
where $\bm{\Lambda}_t^{\text{ML}}=\argmax_{\bm{\Lambda}}b_t(\bm{\Lambda})$.
The policy has two major drawbacks. The first is that it will not take actions to gain information about the states, which in some scenarios can lead to situations in which the belief state will never change, and thus the agent may end up in some sub-optimal state~\cite{littman95}. The second drawback is that the policy discards a large part of the information contained in the belief state, and acts as if there was no uncertainty. As a result, if all the states are almost equally likely, but the action that minimizes the value for the most likely state results in a significant increase in the costs in the other states, following the maximum likelihood policy may result in a very high cost. This situation may arise in the scenario from \cref{fig:motivation} if the most likely state is that all AGVs are in zone Z3, which is not covered by any of the cameras. In this state the costs incurred by the available actions are indifferent, and the agent may decide to schedule zone Z1 even though the AGVs are probably more likely to be in the zones next to Z3, i.e., Z4 or Z2.

\subsubsection{Q-MDP Policy}
To improve the second drawback of the maximum likelihood policy, the Q-MDP policy picks the action that minimizes the expected cost~\cite{littman95}. More specifically, the agent selects the action according to the rule
\ifdefined\SINGLECOL
\begin{align}
  a_t = \argmin_{a}\Bigg\{ \sum_{\bm{\Lambda}_t}b_t(\bm{\Lambda}_t) \left( C(\bm{\Lambda}_t, a)+\E_{\bm{\Lambda}_{t+1}}\left[h(\bm{\Lambda}_{t+1})|\bm{\Lambda}_t,a\right]\right)\Bigg\}.
\end{align}
\else
\begin{align}
  \begin{split}
    a_t &= \argmin_{a}\Bigg\{ \sum_{\bm{\Lambda}_t}b_t(\bm{\Lambda}_t)\\
    &\qquad\times \left( C(\bm{\Lambda}_t, a)+\E_{\bm{\Lambda}_{t+1}}\left[h(\bm{\Lambda}_{t+1})|\bm{\Lambda}_t,a\right]\right)\Bigg\}.
  \end{split}
\end{align}
\fi
Although the Q-MDP policy addresses the second drawback of the ML policy, it still does not favor information gaining actions. The advantage of Q-MDP comes at the cost of being computationally more demanding than the ML policy, but it can be efficiently approximated by Monte Carlo estimation by drawing state samples from the belief distribution and averaging their costs.

\section{Numerical Results}\label{sec:numres}
In this section, we apply the methods presented in the previous sections to a number of scenarios. We first study the optimal and myopic policies for a number of small toy scenarios, which highlight both the impact of the source dynamics and the possibility for sensors to observe multiple sources. We then consider more complex scenarios and investigate the importance of the amount of state information at the scheduler by comparing the AoI of the different policies. The simulation results in the section represent averages over 10 runs of $100\,000$ time slots, where the initial $10\,000$ time slots from each run have been discarded to minimize the impact of the initial state. In all scenarios with relative value iteration, we have truncated the AoI at a value, $Q$, as high as possible, while keeping the computational time reasonable and ensuring that the probability of having an AoI greater than $Q$ is negligible. Thus, the results are likely to be very close to the ones that would have been achieved with unbounded AoI.

\subsection{Toy Scenarios}
We consider the three toy scenarios illustrated in \cref{fig:toyscenario}, and assume that the source states are fully observable.
The first scenario, shown in \cref{fig:toyschemea}, contains two stateless sources, which are observed by three sensors. Sensors 1 and 2 observe only sources 1 and 2, respectively, with probability $p$, while sensor 3 observes both source 1 and 2 with probability $1-p$. Thus, the scenario reveals when correlated observations can be beneficial: Intuitively, one would expect that it is beneficial to primarily schedule sensor 3 when $p$ is small, and to schedule sensors 1 and 2 when $p$ is large. The scenario in \cref{fig:toyschemeb} is intended to show the impact of the source states. Each of the two sources has two states, and the sensors can observe the sources only in one of the states. Furthermore, to show how the source observation probabilities and the source state probabilities impact the scheduling decisions, the source transition probabilities are reversed for the two sources, and sensor 1 observes source 1 with probability $p$, while source 2 is observed by sensor 2 with probability $1-p$. Finally, the third scenario in \cref{fig:toyschemec} combines the first two by including both source states and a sensor that can observe both sources.

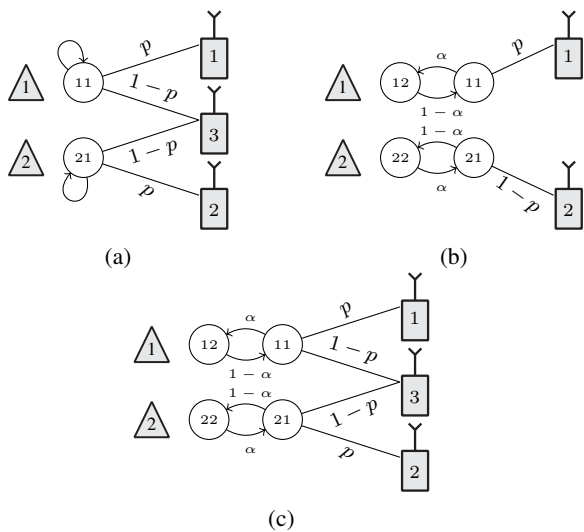
\begin{figure}
    \centering
    \begin{subfigure}[t]{%
    \ifdefined\SINGLECOL
    0.3\linewidth
    \else
    0.49\linewidth
    \fi
}
\begin{center}
    \begin{tikzpicture}[every node/.style={inner sep=0,outer sep=0}]
    \pgfdeclareimage[width=1.5cm]{bs}{clipart/Objects_tower_B.pdf}
    \pgfdeclareimage[width=0.5cm]{source}{clipart/Objects_Iot_health.pdf}
    \pgfdeclareimage[width=0.4cm]{sensor}{clipart/Objects_Iot_simple.pdf}
    
    \node (s11) at (1.0,1.5) {\pgfuseimage{source}};
    \node at (1.0,1.45) {\scriptsize 1};
    \node (s12) at (1.0,0.5) {\pgfuseimage{source}};
    \node at (1.0,0.45) {\scriptsize 2};
    
    \node (sensor1) at (3.5,2) {\pgfuseimage{sensor}};
    \node at (3.5,1.85) {\scriptsize $1$};
    \node (sensor2) at (3.5,-0) {\pgfuseimage{sensor}};
    \node at (3.5,-0.2) {\scriptsize $2$};
    \node (sensor3) at (3.5,1.0) {\pgfuseimage{sensor}};
    \node at (3.5,0.8) {\scriptsize $3$};
    
    \node[circle,draw,minimum width=1.5em] (state11) [right=0.25 of s11] {\tiny $11$};
    \node[circle,draw,minimum width=1.5em] (state21) [right=0.25 of s12] {\tiny $21$};
    
    \path[->] (state11) edge[out=140,in=90,looseness=6] (state11);
    \path[->] (state21) edge[out=280,in=230,looseness=6] (state21);
    
    \draw (state11) -- (sensor1.west) node[midway,above=0.1,rotate=35] {\footnotesize $p$};
    \draw (state21) -- (sensor2.west) node[midway,below=0.1,rotate=-35] {\footnotesize $p$};
    \draw (state11) -- (sensor3.west) node[midway,above=0.1,rotate=-16] {\footnotesize $1-p$};
    \draw (state21) -- (sensor3.west) node[midway,below=0.1,rotate=18] {\footnotesize $1-p$};
    
    \end{tikzpicture}
\caption{}
\label{fig:toyschemea}
\end{center}
\end{subfigure}
\ifdefined\SINGLECOL
\hfill
\fi
\begin{subfigure}[t]{%
    \ifdefined\SINGLECOL
    0.3\linewidth
    \else
    0.49\linewidth
    \fi
}
\begin{center}
    \begin{tikzpicture}[every node/.style={inner sep=0,outer sep=0}]
    \pgfdeclareimage[width=1.5cm]{bs}{clipart/Objects_tower_B.pdf}
    \pgfdeclareimage[width=0.5cm]{source}{clipart/Objects_Iot_health.pdf}
    \pgfdeclareimage[width=0.4cm]{sensor}{clipart/Objects_Iot_simple.pdf}
    
    \node (s11) at (0,1.5) {\pgfuseimage{source}};
    \node at (0,1.45) {\scriptsize 1};
    \node (s12) at (0,0.5) {\pgfuseimage{source}};
    \node at (0,0.45) {\scriptsize 2};
    
    \node (sensor1) at (3.0,2) {\pgfuseimage{sensor}};
    \node at (3.0,1.85) {\scriptsize $1$};
    \node (sensor2) at (3.0,-0.0) {\pgfuseimage{sensor}};
    \node at (3.0,-0.2) {\scriptsize $2$};
    
    \node[circle,draw,minimum width=1.5em] (state11) [right=0.25 of s11] {\tiny $12$};
    \node[circle,draw,minimum width=1.5em] (state12) [right=0.45 of state11] {\tiny $11$};
    \path[<-] (state11) edge[bend left] (state12);
    \path[<-] (state12) edge[bend left] (state11);
    
    \node[circle,draw,minimum width=1.5em] (state21) [right=0.25 of s12] {\tiny $22$};
    \node[circle,draw,minimum width=1.5em] (state22) [right=0.45 of state21] {\tiny $21$};
    \path[<-] (state21) edge[bend left] (state22);
    \path[<-] (state22) edge[bend left] (state21);
    \node at ($(state11)+(0.55,0.35)$) {\tiny $\alpha$};
    \node at ($(state11)+(0.55,-0.4)$) {\tiny $1-\alpha$};
    \node at ($(state21)+(0.55,0.35)$) {\tiny $1-\alpha$};
    \node at ($(state21)+(0.55,-0.4)$) {\tiny $\alpha$};
    
    \draw (state12) -- (sensor1.west) node[midway,above=0.1,rotate=35] {\footnotesize $p$};
    \draw (state22) -- (sensor2.west) node[midway,below=0.1,rotate=-35] {\footnotesize $1-p$};
    
    \end{tikzpicture}
\caption{}
\label{fig:toyschemeb}
\end{center}
\end{subfigure}
\begin{subfigure}[t]{%
    \ifdefined\SINGLECOL
    0.3\linewidth
    \else
    0.49\linewidth
    \fi
}
\begin{center}
    \begin{tikzpicture}[every node/.style={inner sep=0,outer sep=0}]
    \pgfdeclareimage[width=1.5cm]{bs}{clipart/Objects_tower_B.pdf}
    \pgfdeclareimage[width=0.5cm]{source}{clipart/Objects_Iot_health.pdf}
    \pgfdeclareimage[width=0.4cm]{sensor}{clipart/Objects_Iot_simple.pdf}
    
    \node (s11) at (0,1.5) {\pgfuseimage{source}};
    \node at (0,1.45) {\scriptsize 1};
    \node (s12) at (0,0.5) {\pgfuseimage{source}};
    \node at (0,0.45) {\scriptsize 2};
    
    \node (sensor1) at (3.5,2.0) {\pgfuseimage{sensor}};
    \node at (3.5,1.85) {\scriptsize $1$};
    \node (sensor2) at (3.5,-0.0) {\pgfuseimage{sensor}};
    \node at (3.5,-0.2) {\scriptsize $2$};
    \node (sensor3) at (3.5,1.0) {\pgfuseimage{sensor}};
    \node at (3.5,0.8) {\scriptsize $3$};
    
    \node[circle,draw,minimum width=1.5em] (state11) [right=0.25 of s11] {\tiny $12$};
    \node[circle,draw,minimum width=1.5em] (state12) [right=0.45 of state11] {\tiny $11$};
    \path[<-] (state11) edge[bend left] (state12);
    \path[<-] (state12) edge[bend left] (state11);
    
    \node[circle,draw,minimum width=1.5em] (state21) [right=0.25 of s12] {\tiny $22$};
    \node[circle,draw,minimum width=1.5em] (state22) [right=0.45 of state21] {\tiny $21$};
    \path[<-] (state21) edge[bend left] (state22);
    \path[<-] (state22) edge[bend left] (state21);
    \node at ($(state11)+(0.55,0.35)$) {\tiny $\alpha$};
    \node at ($(state11)+(0.55,-0.4)$) {\tiny $1-\alpha$};
    \node at ($(state21)+(0.55,0.35)$) {\tiny $1-\alpha$};
    \node at ($(state21)+(0.55,-0.4)$) {\tiny $\alpha$};
    
    \draw (state12) -- (sensor1.west) node[midway,above=0.1,rotate=35] {\footnotesize $p$};
    \draw (state22) -- (sensor2.west) node[midway,below=0.1,rotate=-35] {\footnotesize $p$};
    \draw (state12) -- (sensor3.west) node[midway,above=0.1,rotate=-16] {\footnotesize $1-p$};
    \draw (state22) -- (sensor3.west) node[midway,below=0.1,rotate=25] {\footnotesize $1-p$};

    \end{tikzpicture}
\caption{}
\label{fig:toyschemec}
\end{center}
\end{subfigure}
    \caption{Toy scenarios considered in the evaluation.}
    \label{fig:toyscenario}
\end{figure}

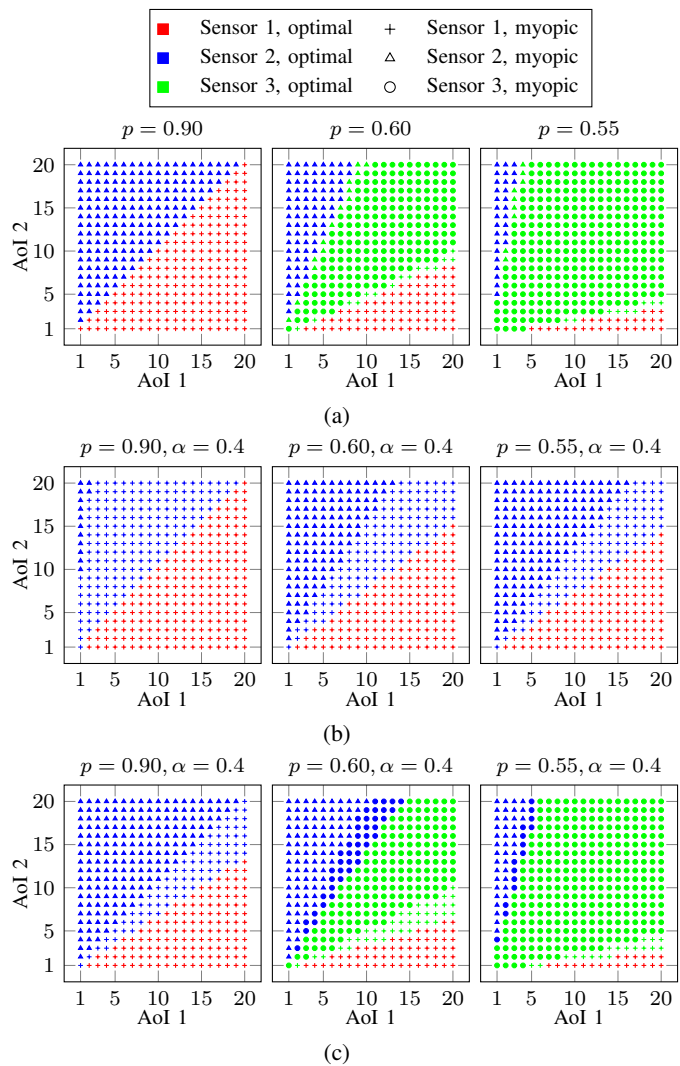
\begin{figure}
  \centering  
    \begin{subfigure}[t]{\linewidth}
    \begin{center}
    \begin{tikzpicture}[tight background]
\begin{groupplot}[group style={
group size= 3 by 1,
horizontal sep=0.15cm,
every plot/.style={%
xtick={1,5,10,15,20},
ytick={1,5,10,15,20},
ylabel shift=-5pt,
xlabel shift=-5pt,
every axis title shift=0pt,
mark size=0.9pt,
scatter/classes={%
0={mark=+,red},%
1={mark=+,blue},%
2={mark=+,green},%
3={mark=triangle*,red},%
4={mark=triangle*,blue},%
5={mark=triangle*,green},%
6={mark=*,red},%
7={mark=*,blue},%
8={mark=*,green}}
}},
height=4.2cm,width=4.2cm,
]

\nextgroupplot[title={$p=0.90$},ylabel={AoI 2},xlabel={AoI 1},
font=\footnotesize,
legend style={column sep=10pt, legend columns=-1,
font=\footnotesize,
legend columns=2,
legend entries={{Sensor 1, optimal}, {Sensor 1, myopic}, {Sensor 2, optimal}, {Sensor 2, myopic}, {Sensor 3, optimal}, {Sensor 3, myopic}},
legend to name=toyscenario_plots_legend}
]

\addlegendimage{only marks, mark=square*,red}
\addlegendimage{only marks, mark=+}
\addlegendimage{only marks, mark=square*,blue}
\addlegendimage{only marks, mark=triangle}
\addlegendimage{only marks, mark=square*,green}
\addlegendimage{only marks, mark=o}

\addplot[scatter,only marks,scatter src=explicit symbolic] table [col sep=comma, meta index=2] {data/toyscenario_plots1_p090.csv};

\nextgroupplot[title={$p=0.60$},xlabel={AoI 1},yticklabels={,,},font=\footnotesize]
\addplot[scatter,only marks,scatter src=explicit symbolic] table [col sep=comma, meta index=2] {data/toyscenario_plots1_p060.csv};

\nextgroupplot[title={$p=0.55$},xlabel={AoI 1},yticklabels={,,},font=\footnotesize]
\addplot[scatter,only marks,scatter src=explicit symbolic] table [col sep=comma, meta index=2] {data/toyscenario_plots1_p055.csv};

\end{groupplot}
\node at ($(group c2r1.north) + (0,1.2cm)$) {\ref{toyscenario_plots_legend}}; 
\end{tikzpicture}\vspace{-1.5em}%
    \caption{}
    \label{fig:toyschemea_plot}
    \end{center}
    \end{subfigure}
    \begin{subfigure}[t]{\linewidth}
    \begin{center}
    \begin{tikzpicture}[tight background]
\begin{groupplot}[group style={
group size= 3 by 1,
horizontal sep=0.15cm,
every plot/.style={%
xtick={1,5,10,15,20},
ytick={1,5,10,15,20},
ylabel shift=-5pt,
xlabel shift=-5pt,
every axis title shift=0pt,
mark size=0.9pt,
scatter/classes={%
0={mark=+,red},%
1={mark=+,blue},%
2={mark=+,green},%
3={mark=triangle*,red},%
4={mark=triangle*,blue},%
5={mark=triangle*,green},%
6={mark=*,red},%
7={mark=*,blue},%
8={mark=*,green}}
}},
height=4.2cm,width=4.2cm,
]

\nextgroupplot[title={$p=0.90, \alpha=0.4$},ylabel={AoI 2},xlabel={AoI 1},
font=\footnotesize]

\addlegendimage{only marks, mark=square*,red}
\addlegendimage{only marks, mark=square*,blue}
\addlegendimage{only marks, mark=+}
\addlegendimage{only marks, mark=triangle}

\addplot[scatter,only marks,scatter src=explicit symbolic] table [col sep=comma, meta index=2] {data/toyscenario_plots2_alpha04_p090.csv};

\nextgroupplot[title={$p=0.60, \alpha=0.4$},xlabel={AoI 1},yticklabels={,,},font=\footnotesize]
\addplot[scatter,only marks,scatter src=explicit symbolic] table [col sep=comma, meta index=2] {data/toyscenario_plots2_alpha04_p060.csv};

\nextgroupplot[title={$p=0.55, \alpha=0.4$},xlabel={AoI 1},yticklabels={,,},font=\footnotesize]
\addplot[scatter,only marks,scatter src=explicit symbolic] table [col sep=comma, meta index=2] {data/toyscenario_plots2_alpha04_p055.csv};

\end{groupplot}
\end{tikzpicture}\vspace{-1.5em}%
    \caption{}
    \label{fig:toyschemeb_plot}
    \end{center}
    \end{subfigure}
    \begin{subfigure}[t]{\linewidth}
    \begin{center}
    \begin{tikzpicture}[tight background]
\begin{groupplot}[group style={
group size= 3 by 1,
horizontal sep=0.15cm,
every plot/.style={%
xtick={1,5,10,15,20},
ytick={1,5,10,15,20},
ylabel shift=-5pt,
xlabel shift=-5pt,
every axis title shift=0pt,
mark size=0.9pt,
scatter/classes={%
0={mark=+,red},%
1={mark=+,blue},%
2={mark=+,green},%
3={mark=triangle*,red},%
4={mark=triangle*,blue},%
5={mark=triangle*,green},%
6={mark=*,red},%
7={mark=*,blue},%
8={mark=*,green}}
}},
height=4.2cm,width=4.2cm,
]

\nextgroupplot[title={$p=0.90, \alpha=0.4$},ylabel={AoI 2},xlabel={AoI 1},
font=\footnotesize]

\addlegendimage{only marks, mark=square*,red}
\addlegendimage{only marks, mark=square*,blue}
\addlegendimage{only marks, mark=square*,green}
\addlegendimage{only marks, mark=+}
\addlegendimage{only marks, mark=triangle}
\addlegendimage{only marks, mark=o}

\addplot[scatter,only marks,scatter src=explicit symbolic] table [col sep=comma, meta index=2] {data/toyscenario_plots3_alpha04_p090.csv};

\nextgroupplot[title={$p=0.60, \alpha=0.4$},xlabel={AoI 1},yticklabels={,,},font=\footnotesize]
\addplot[scatter,only marks,scatter src=explicit symbolic] table [col sep=comma, meta index=2] {data/toyscenario_plots3_alpha04_p060.csv};

\nextgroupplot[title={$p=0.55, \alpha=0.4$},xlabel={AoI 1},yticklabels={,,},font=\footnotesize]
\addplot[scatter,only marks,scatter src=explicit symbolic] table [col sep=comma, meta index=2] {data/toyscenario_plots3_alpha04_p055.csv};

\end{groupplot}
\end{tikzpicture}\vspace{-1.5em}%
    \caption{}
    \label{fig:toyschemec_plot}
    \end{center}
    \end{subfigure}
    \caption{Scheduling policies for the toy scenarios in \cref{fig:toyscenario}. The optimal actions obtained using relative value iteration are indicated by the color, while the shapes of the points indicate the myopic actions.}
    \label{fig:toyscenario_plots}
\end{figure}

For simplicity, we assume a perfect channel between the sensor and the destination, i.e., $q_1=q_2=1$. We limit ourselves to consider the case when source 1 is in state $11$ and source 2 is in state $21$, which represents the only non-trivial scheduling situation in all of the three scenarios. The value functions used by the optimal policy are obtained with a truncation of the AoI at $Q=100$.
The corresponding optimal policies are shown in \cref{fig:toyscenario_plots}, where the colors of the points indicate the optimal policy obtained by relative value iteration and the point shapes show the myopic action. The policy for the scenario in \cref{fig:toyschemea} is shown in \cref{fig:toyschemea_plot}. It can be seen that the optimal policy is to schedule sensors 1 and 2 when $p$ is high or when the AoI of one source is much higher than the AoI of the other. On the other hand, sensor 3 is advantageous when $p$ is low or the AoI difference is small. Note that the actions by the myopic policy are very close to the optimal ones only with differences close to the decision boundaries. While this subtle difference is unlikely to have a significant impact on the performance, it is due to the memory introduced by the AoI process, causing the scheduling problem is non-trivial. As established by \cref{prop:myopicoptimal}, the difference between the myopic and optimal policies vanishes as $p$ approaches 1, which is also indicated by the figures already for $p=0.9$ where the myopic and optimal policies coincide in considered range of AoIs.

The difference between the optimal and the myopic policies become more pronounced when the sources have states that impact their observability. In the scenario in \cref{fig:toyschemeb}, source 1 can be observed by a sensor only in a fraction $1-\alpha$ of the time, while source 2 is in a fraction $\alpha$ of the time. As a result, for $\alpha=0.4$ shown in \cref{fig:toyschemeb_plot}, the optimal policy is much more likely to schedule sensor 2 than the myopic policy is. The same can be seen in the last scenario in \cref{fig:toyschemec}, where the optimal policy also favors sensor 2, while the myopic policy is the same as in the first scenario.

\subsection{Small Factory Scenario}
We now return to the scenario from the introduction where AGVs (sources) move around in four zones (represented by their states), out of which three are covered by cameras (sensors). We study the average AoI achieved by the derived policies and investigate the AoI penalty that is achieved by partial observability (with detectable sources) and no delay ($\tau=0$). We assume again that there are three AGVs, and denote by $\alpha$ the probability that they move from one zone to an adjacent one, so that they remain in the same zone with probability $1-2\alpha$ (they cannot move diagonally). The probability that a camera fails to observe an AGV that is within its zone, e.g., if the AGV is hidden behind an object, is denoted by $1-p$, and we assume again no transmission errors, i.e., $q_n=1$ for all $n$. The value functions used by all except the uniform random and myopic policies are obtained from relative value iteration truncated at an AoI of $Q=20$.

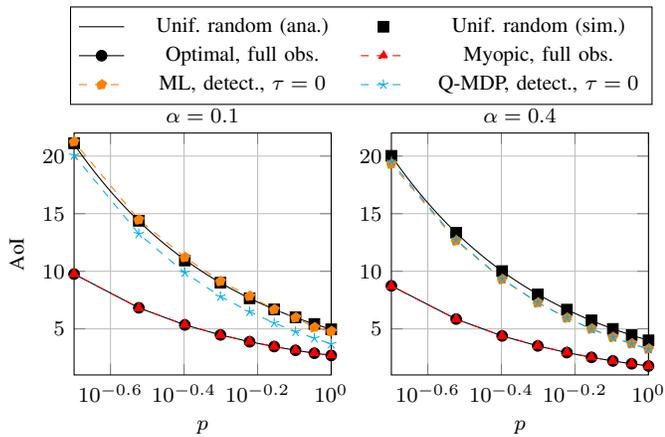
\begin{figure}
    \centering
    \begin{tikzpicture}[tight background]
  \begin{groupplot}[group style={%
      group size= 2 by 1,
      ylabels at=edge left,
  horizontal sep=0.8cm,
},
title style={yshift=-0.2cm},
height=4.8cm,width=5cm,
xmin=0.2, xmax=1,
ymin=1, ymax=22,
every axis/.append style={mark options=solid},
]

\nextgroupplot[xmode=log,title={$\alpha=0.1$},ylabel={AoI},xlabel={$p$},
font=\footnotesize,grid=both,
legend style={column sep=10pt,
font=\footnotesize,
legend columns=2,
legend to name=small_cam_scenario_legend}
]
\addplot+[mark=none,color=black] table [col sep=comma,x index=0,y index=1] {data/small_cam_0.1_rnd_ana.csv};
\addlegendentry{Unif. random (ana.)};

\addplot+[only marks,mark=square*,color=black,mark options={fill=black}] table [col sep=comma,x index=0,y index=1] {data/small_cam_0.1.csv};
\addlegendentry{Unif. random (sim.)};

\addplot+[mark=*,color=black,mark options={fill=black}] table [col sep=comma,x index=0,y index=2] {data/small_cam_0.1.csv};
\addlegendentry{Optimal, full obs.};

\addplot+[dashed, mark=triangle*,color=red] table [col sep=comma,x index=0,y index=3] {data/small_cam_0.1.csv};
\addlegendentry{Myopic, full obs.};

\addplot+[dashed, mark=pentagon*, color=orange,mark options={fill=orange}] table [col sep=comma,x index=0,y index=4] {data/small_cam_0.1.csv};
\addlegendentry{ML, detect., $\tau=0$};
\addplot+[dashed,mark=star,color=cyan,mark options={fill=cyan}] table [col sep=comma,x index=0,y index=5] {data/small_cam_0.1.csv};
\addlegendentry{Q-MDP, detect., $\tau=0$};

\nextgroupplot[xmode=log,title={$\alpha=0.4$},xlabel={$p$},font=\footnotesize,grid=both]

\addplot+[mark=none,color=black] table [col sep=comma,x index=0,y index=1] {data/small_cam_0.4_rnd_ana.csv};

\addplot+[only marks,mark=square*,color=black,mark options={fill=black}] table [col sep=comma,x index=0,y index=1] {data/small_cam_0.4.csv};

\addplot+[mark=*,color=black,mark options={fill=black}] table [col sep=comma,x index=0,y index=2] {data/small_cam_0.4.csv};

\addplot+[dashed, mark=triangle*,color=red] table [col sep=comma,x index=0,y index=3] {data/small_cam_0.4.csv};

\addplot+[dashed, mark=pentagon*, color=orange,mark options={fill=orange}] table [col sep=comma,x index=0,y index=4] {data/small_cam_0.4.csv};
\addplot+[dashed,mark=star,color=cyan,mark options={fill=cyan}] table [col sep=comma,x index=0,y index=5] {data/small_cam_0.4.csv};

\end{groupplot}
\path (group c1r1.north west) -- node[above=0.25cm]{\ref{small_cam_scenario_legend}} (group c2r1.north east);
\end{tikzpicture}%
    \caption{Average AoI achieved by the different policies in the small factory scenario.}
    \label{fig:small_cam_scenario_plots}
\end{figure}

The average AoI of five different policies are shown in \cref{fig:small_cam_scenario_plots} for $\alpha=0.1$ and $\alpha=0.4$, respectively. In both cases, the myopic policy performs close to optimally despite the fact that the sources are stateful, indicating that the additional cost of acting myopically is negligible in terms of average AoI. The AoI in the case of $\alpha=0.1$ is generally higher than for $\alpha=0.4$, which is because the sources stay longer in the invisible zone.
Regarding the policies with partial observability, it can be seen in that for $\alpha=0.1$ the ML policy is only little if at all better than the random policy, while the Q-MDP policy is significantly better. This is because for low values of $\alpha$ where the sources are most likely to stay in the same zone between time slots, the ML policy tends to conclude that a given source is in the hidden zone. Since no sensor can observe this zone, it can do nothing but act like the random policy. On the other hand, the Q-MDP policy is able to make use of the entire belief distribution, and not only the most likely belief, which allows it to take more informed scheduling decisions using the less likely states.
However, as $\alpha$ increases, the AoI of the ML policy gets closer to that of the Q-MDP policy (see the right plot), which is because the likelihood of staying in the same zone for a long time decreases, and thus the maximum likelihood belief is more likely to be outside the hidden zone.

\subsection{Large Factory Scenario}
Finally, we consider a large factory scenario comprising an $8\times 8$ grid of 64 cells, and 10 AGVs. Each of the cells is equipped with a camera that observes the AGVs inside the zone with probability one. However, in addition to the 64 cameras that cover the individual cells, there are sensors that cover larger areas but with less reliable observations. Specifically, in addition to the 64 sensors that cover each cell, there are 16 sensors that cover each of the $2\times 2$ zones, 4 sensors that cover each of the $4\times 4$ zones, and one sensor that covers the entire area. We refer to these groups of sensors as level 1--4 sensors, where level 1 sensors are the sensors that cover a single cell, and the level 4 sensor is the one that covers the entire area. The observation probabilities for the sensors at a certain level, $l=1,\ldots,4$, are defined as
\begin{equation}
    p_l=\gamma^{l-1},
\end{equation}
where $\gamma$ is a degradation factor that controls how much the observation probability decreases per level. For instance, if $\gamma=1$, the sensors at each level observe each of the cells within their zones with probability $1$, whereas if $\gamma=1/4$ the observation probability at each level is the reciprocal of the number of cells that the sensors cover. Thus, when $\gamma$ is high the scheduler is more likely to schedule the higher level sensors than when $\gamma$ is low. By controlling the probability that a sensor detects multiple sources, we can obtain general insight into the impact of the ability to detect multiple sources on the policies and the AoI. As before $\alpha$ denotes the probability of transitioning into an adjacent cell. Due to the size of the problem, we consider only myopic policies.

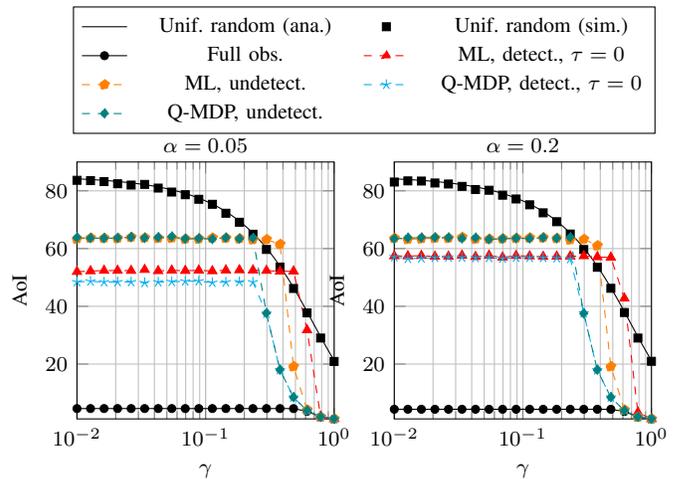
\begin{figure}
    \centering
    \begin{tikzpicture}[tight background]
\begin{groupplot}[group style={%
  \ifdefined\SINGLECOL
  group size= 2 by 1,
  horizontal sep=2cm,
  \else
  group size= 2 by 1,
  ylabels at=edge left,
  horizontal sep=0.8cm,
  \fi
  vertical sep=1.5cm
},
title style={yshift=-0.2cm},
height=5cm,width=5cm,
xmin=0.01, xmax=1,
ymin=1, ymax=90,
every axis/.append style={mark options=solid},
]

\nextgroupplot[xmode=log,title={$\alpha=0.05$},ylabel={AoI},xlabel={$\gamma$},
font=\footnotesize,grid=both,
legend style={column sep=10pt,
font=\footnotesize,
legend columns=2,
legend to name=large_cam_scenario_legend}
]
\addplot+[mark=none,color=black] table [col sep=comma,x index=0,y index=1] {data/large_cam_0.05_rnd_ana.csv};
\addlegendentry{Unif. random (ana.)};

\addplot+[only marks,mark=square*,color=black,mark options={fill=black},mark size=1.5pt] table [col sep=comma,x index=0,y index=1] {data/large_cam_0.05.csv};
\addlegendentry{Unif. random (sim.)};

\addplot+[mark=*,color=black,mark options={fill=black},mark size=1.5pt] table [col sep=comma,x index=0,y index=2] {data/large_cam_0.05.csv};
\addlegendentry{Full obs.};

\addplot+[dashed, mark=triangle*,color=red] table [col sep=comma,x index=0,y index=3] {data/large_cam_0.05.csv};
\addlegendentry{ML, detect., $\tau=0$};

\addplot+[dashed, mark=pentagon*, color=orange,mark options={fill=orange}] table [col sep=comma,x index=0,y index=4] {data/large_cam_0.05.csv};
\addlegendentry{ML, undetect.};
\addplot+[dashed,mark=star,color=cyan,mark options={fill=cyan}] table [col sep=comma,x index=0,y index=5] {data/large_cam_0.05.csv};
\addlegendentry{Q-MDP, detect., $\tau=0$};
\addplot+[dashed,mark=diamond*,color=teal,mark options={fill=teal}] table [col sep=comma,x index=0,y index=6] {data/large_cam_0.05.csv};
\addlegendentry{Q-MDP, undetect.};

\nextgroupplot[xmode=log,title={$\alpha=0.2$},ylabel={AoI},xlabel={$\gamma$},font=\footnotesize,grid=both]
\addplot+[mark=none,color=black] table [col sep=comma,x index=0,y index=1] {data/large_cam_0.2_rnd_ana.csv};

\addplot+[only marks,mark=square*,color=black,mark options={fill=black},mark size=1.5pt] table [col sep=comma,x index=0,y index=1] {data/large_cam_0.2.csv};

\addplot+[mark=*,color=black,mark options={fill=black},mark size=1.5pt] table [col sep=comma,x index=0,y index=2] {data/large_cam_0.2.csv};

\addplot+[dashed, mark=triangle*,color=red] table [col sep=comma,x index=0,y index=3] {data/large_cam_0.2.csv};

\addplot+[dashed, mark=pentagon*, color=orange,mark options={fill=orange}] table [col sep=comma,x index=0,y index=4] {data/large_cam_0.2.csv};
\addplot+[dashed,mark=star,color=cyan,mark options={fill=cyan}] table [col sep=comma,x index=0,y index=5] {data/large_cam_0.2.csv};
\addplot+[dashed,mark=diamond*,color=teal,mark options={fill=teal}] table [col sep=comma,x index=0,y index=6] {data/large_cam_0.2.csv};

\end{groupplot}
\ifdefined\SINGLECOL
\path (group c1r1.north west) -- node[above=0.25cm]{\ref{large_cam_scenario_legend}} (group c2r1.north east);
\else
\path (group c1r1.north west) -- node[above=0.25cm]{\ref{large_cam_scenario_legend}} (group c2r1.north east);
\fi
\end{tikzpicture}\vspace{-1.0em}%
    \caption{Average AoI for the large industrial scenario obtained using myopic policies for $\alpha=0.05$ and $\alpha=0.2$.}
    \label{fig:large_cam_scenario_plots}
\end{figure}

\Cref{fig:large_cam_scenario_plots} shows the average AoI obtained using various policies for $\alpha=0.05$ and $\alpha=0.2$ with no detection delay, i.e., $\tau=0$.
In general, the average AoI when $\alpha=0.05$ is lower than for $\alpha=0.2$, because the AGVs move slower, and thus are more predictable.
In both cases, the average AoI is constant for small values of $\gamma$ (except for the uniform random policy), which reflects that the policy is the same. Specifically, all policies sample only sensors at level one. However, when $\gamma$ approaches $\gamma=1/4$, it becomes beneficial to schedule higher level sensors, either when multiple AGVs are likely to be located in the same area, or when the uncertainty is high and scheduling a level one sensor is unlikely to result in an observation. Note that this effect is first reflected for the Q-MDP policies. This is because they, contrary to the ML policies, include the uncertainty in their decisions. In fact, when $\gamma\ge 1/4$ Q-MDP agents only schedule the sensor at level 4, while the ML agents require a larger value of $\gamma$ before they start scheduling higher level sensors. Furthermore, this transition is independent of $\alpha$, suggesting that, regardless of the system dynamics, the ability for sensors to detect multiple sources is particularly useful when there is uncertainty about the source states.

Somewhat surprisingly, the ML policy with undetectable sources leads to a faster AoI decrease compared to the ML policy with detectable sources, and performs better within a range of $\gamma$. The reason is that in the detectable case where the AoIs are known to the agent, the policy is likely to be dictated by a few AGVs with high AoI, which are scheduled at the grid level. On the other hand, in the undetectable case where the AoIs are unobservable, the AoI beliefs of the different AGVs are much more likely to be concentrated around the same values, which makes the agent more likely to schedule the sensors at the higher levels.

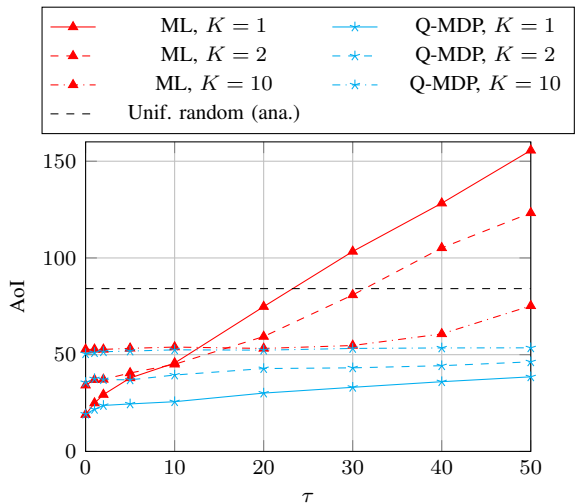
\begin{figure}
  \centering
  \begin{tikzpicture}[tight background]
\begin{axis}[
height=5.7cm,width=7.5cm,
every axis/.append style={mark options=solid},
ylabel={AoI},
xlabel={$\tau$},
font=\footnotesize,
grid=both,
xmin=0,xmax=50,
ymin=0,ymax=160,
legend style={column sep=10pt, font=\footnotesize,legend columns=2, at={(0.5,1.025)},anchor=south}
]

\addplot+[mark=triangle*,color=red,mark options={fill=red}] table [col sep=comma,x index=0,y index=2] {data/large_cam_delayed.csv};
\addlegendentry{ML, $K=1$};

\addplot+[mark=star,color=cyan,mark options={fill=cyan}] table [col sep=comma,x index=0,y index=3] {data/large_cam_delayed.csv};
\addlegendentry{Q-MDP, $K=1$};

\addplot+[dashed, mark=triangle*,color=red,mark options={fill=red,solid}] table [col sep=comma,x index=0,y index=5] {data/large_cam_delayed.csv};
\addlegendentry{ML, $K=2$};

\addplot+[dashed,mark=star,color=cyan,mark options={fill=cyan,solid}] table [col sep=comma,x index=0,y index=6] {data/large_cam_delayed.csv};
\addlegendentry{Q-MDP, $K=2$};

\addplot+[dashdotted, mark=triangle*,color=red,mark options={fill=red,solid}] table [col sep=comma,x index=0,y index=8] {data/large_cam_delayed.csv};
\addlegendentry{ML, $K=10$};

\addplot+[dashdotted,mark=star,color=cyan,mark options={fill=cyan,solid}] table [col sep=comma,x index=0,y index=9] {data/large_cam_delayed.csv};
\addlegendentry{Q-MDP, $K=10$};

\addplot[color=black,dashed,no marks] table [col sep=comma,x index=0,y index=1] {data/large_cam_delayed.csv};
\addlegendentry{Unif. random (ana.)};

\end{axis}
\end{tikzpicture}%
  \caption{Average AoI for the large industrial scenario with delayed detectable sources for $\alpha=0.05$, $\gamma=0.01$, and $K$ AGVs.}
  \label{fig:large_cam_scenario_delayed_plot}
\end{figure}

We finish the section by considering the effect of the detection delay $\tau$. Because the impact of a delay is most prominent for small $\alpha$ and a small number of AGVs, $K$, we limit the study to the case with $\alpha=0.05$ and $K\in\{1,2,10\}$, and fix $\gamma=0.01$. The average AoI is shown in \cref{fig:large_cam_scenario_delayed_plot} for various delays $\tau$. It can be seen that the delay comes at a relatively small penalty for the Q-MDP policy, while the ML policy incurs a large penalty when the delay is large. This is because the ML policy, contrary to the Q-MDP policy, only considers the most likely state. When the probability that a scheduled camera observes an AGV is small, the maximum likelihood state is almost unaffected by the unobserved scheduling decisions. The Q-MDP policy, on the other hand, exploits the entire scheduling history to avoid scheduling the same camera many times in a row.

\section{Conclusion}\label{sec:conclusion}
Timeliness of information is a relevant performance indicator for many IoT applications in which the AoI serves as an attractive alternative to the traditional end-to-end latency by considering the information from the destination perspective as opposed to the transmitter perspective. In this paper, we have considered the problem of scheduling a set of sensors that observe a set of dynamic sources. Furthermore, we have assumed that each source is observed by multiple sensors with probabilities that depend on the current state of the source. This scenario represents a case that arises naturally in many IoT monitoring applications, such as camera monitoring. We have formulated the scheduling problem as an MDP, and used it to derive optimal and approximate, but more practical, scheduling policies for both the fully observable and the partially observable cases. Through numerical results, we have shown that due to the dynamic sources, partial observability comes at a high cost in terms of AoI compared to the fully observable case. Furthermore, we have shown that the fact that sensors can observe multiple sources can be beneficial especially in the partial observable case where there is uncertainty about the source states. However, this requires that the policy is able to take this into account, which is the case for the approximate Q-MDP policy, but not for the ML policy. On the other hand, the ML policy has a lower complexity, which may be advantageous in some situations.

\appendices
\crefalias{section}{appendix}
\section{Proof of \cref{theo:randompolicyavgaoi}}\label{app:randompolicyavgaoiproof}
We first note that the sources are independent, and thus the average AoI of each source can be computed independently, and the total average AoI can be obtained as the arithmetic mean of the average AoIs of the individual sources.

The AoI of a single source evolves according to a two-dimensional Markov chain in which one dimension represents its AoI and the other represents its internal state. The probability that the source is observed in a given time slot depends on its current state and the sensor that was scheduled. Since exactly one of the $N$ sensors is scheduled, drawn from the discrete distribution $\Pr(a_t=n)=\bar{a}_n$, the conditional probability that source $k$ is observed given its state can be represented by the vector
\begin{equation}
    \mathbf{p}_k=\sum_{n=1}^N \bar{a}_n q_n\begin{bmatrix}p_{nk}^{(1)} & p_{nk}^{(2)} &\cdots & p_{nk}^{(S_k)}\end{bmatrix},
\end{equation}
where each entry represents one of the $S_k$ source states. Using this, the \emph{conditional} transition matrix given that the source is observed can be written $\mathbf{R}_k^{(\text{succ})}=\diag(\mathbf{p}_k)\mathbf{R}_k$. Similarly, the conditional transition matrix given that the source is \emph{not} observed is $\mathbf{R}_k^{(\text{fail})}=\left(\mathbf{I}-\diag(\mathbf{p}_k)\right)\mathbf{R}_k$. Using this, the truncated joint transition matrix of the source AoI and its state with a maximal AoI of $Q$, $\widehat{\bm{\Psi}}_k\in\mathbb{R}^{S_k Q\times S_k Q}$, is given as
\begin{equation}
    \widehat{\bm{\Psi}}_k=\begin{bmatrix}
    \mathbf{R}_k^{(\text{succ})} & \mathbf{R}_k^{(\text{fail})} & 0  &\cdots & 0\\
    \mathbf{R}_k^{(\text{succ})} & 0 & \mathbf{R}_k^{(\text{fail})} & \cdots & 0 \\
    \vdots & \vdots & \vdots & \ddots & \vdots\\
    \mathbf{R}_k^{(\text{succ})} & 0 & 0 & \cdots & \mathbf{R}_k^{(\text{fail})} \\
    \mathbf{R}_k^{(\text{succ})} & 0 & 0 & \cdots & \mathbf{R}_k^{(\text{fail})}
    \end{bmatrix}.
\end{equation}
It is easy to verify that because $\mathbf{R}_k$ defines an irreducible and aperiodic Markov chain, the Markov chain defined by $\widehat{\bm{\Psi}}_k$ is irreducible, aperiodic and positive recurrent and thus has a unique stationary distribution given by the vector $\bm{\phi}_k$ that satisfies
\begin{equation}\label{eq:phi_stationaryequation}
    \bm{\phi}_k \widehat{\bm{\Psi}}_k = \bm{\phi}_k.
\end{equation}
Due to the structure of $\widehat{\bm{\Psi}}_k$, $\bm{\phi}_k$ is a (block) vector
\begin{equation}
    \bm{\phi}_k=\begin{bmatrix}
    \bm{\phi}_k^{(1)}&\bm{\phi}_k^{(1)}&\cdots&\bm{\phi}_k^{(Q)}
    \end{bmatrix},
\end{equation}
where $\bm{\phi}_k^{(q)}=\begin{bmatrix}\Pr(q,s_k=1)&\cdots&\Pr(q,s_k=K)\end{bmatrix}$ is the $S_k$-element vector where the $j$-th entry is the joint probability that source $k$ is in state $j$ and its AoI is $q$.

By writing out \cref{eq:phi_stationaryequation} it is straightforward to show that the blocks of $\bm{\phi}_k$ can be written
\begin{equation}
    \bm{\phi}_k^{(q)}=\bm{\beta}_k\mathbf{R}_{k}^{(\text{succ})}\left(\mathbf{R}_{k}^{(\text{fail})}\right)^{q-1},
\end{equation}
and the average Q-truncated AoI of source $k$ is thus
\begin{align}
    \bar{\Delta}_k^{(Q)}&=\sum_{q=1}^Q q \bm{\phi}_k^{(q)}\mathbf{1}^T\\
    &=\bm{\beta}_k\mathbf{R}_{k}^{(\text{succ})}\sum_{q=1}^Q q\left(\mathbf{R}_{k}^{(\text{fail})}\right)^{q-1}\mathbf{1}^T
\end{align}
where $\mathbf{1}$ is the row vector of all ones of appropriate dimension.
Provided that the source can be observed by a sensor that is scheduled with a non-zero probability, i.e., $\sum_{n=1}^N\sum_{s=1}^{S_k}\bar{a}_n q_n p_{nk}^{(s)}>0$, it follows from the Perron-Frobenius theorem~\cite{pillai05} that the eigenvalues of $\mathbf{R}_{k}^{(\text{fail})}$ are non-negative real and strictly less than one. Thus, the limit $\lim_{Q\to\infty}\sum_{q=1}^Q q \left(\mathbf{R}_{k}^{(\text{fail})}\right)^{q-1}$ exists and the untruncated AoI is
\begin{align}
    E[\Delta_{\text{random},k}]&=\lim_{Q\to\infty}\bar{\Delta}_k^{(Q)}\\
    &=\bm{\beta}_k\mathbf{R}_{k}^{(\text{succ})}\sum_{q=1}^{\infty} q\left(\mathbf{R}_{k}^{(\text{fail})}\right)^{q-1}\mathbf{1}^T\\
    &=\bm{\beta}_k\mathbf{R}_{k}^{(\text{succ})}\left(\sum_{q=1}^{\infty} \left(\mathbf{R}_{k}^{(\text{fail})}\right)^{q-1}\right)^2\mathbf{1}^T\\
    &=\bm{\beta}_k\mathbf{R}_{k}^{(\text{succ})}\left(\mathbf{I}-\mathbf{R}_{k}^{(\text{fail})}\right)^{-2}\mathbf{1}^T.
\end{align}
Averaging over the $K$ sources completes the proof.

\section{Proof of \cref{theo:optstationarypolicy}}\label{app:proofoptstationarypolicy}
We prove \cref{theo:optstationarypolicy} by showing that the sufficient conditions in~\cite{sennott89} are fulfilled, by following a similar approach as in~\cite{hsu2017}. We first state the conditions in a number of lemmas and then prove \cref{theo:optstationarypolicy}.
\begin{lemma}\label{pro:condition1}
There exists a deterministic stationary policy that minimizes the discounted cost $V_{\alpha}(\bm{\Lambda})$ for any $0<\alpha<1$. Furthermore, $V_{\alpha}(\bm{\Lambda})$ is finite for every $\alpha$ and $\bm{\Lambda}$, and satisfies the Bellman equation
\ifdefined\SINGLECOL
\begin{align} 
    V_{\alpha}(\bm{\Lambda}) &= \min_{a}\Bigg\{ C(\bm{\Lambda}, a) + \alpha\sum_{\bm{\Lambda}_{t+1}}\Pr(\bm{\Lambda}_{t+1}|\bm{\Lambda}_t=\bm{\Lambda}, a_t=a)V_{\alpha}(\bm{\Lambda}_{t+1})\Bigg\}\nonumber\\
    &=\min_{a}\left\{ C(\bm{\Lambda}, a) + \alpha\E_{\bm{\Lambda}'}\left[V_{\alpha}(\bm{\Lambda}')|\bm{\Lambda},a\right]\right\}.
\end{align}
\else
\begin{align} 
\begin{split}
    V_{\alpha}(\bm{\Lambda}) &= \min_{a}\Bigg\{ C(\bm{\Lambda}, a)\\
    &\qquad+ \alpha\sum_{\bm{\Lambda}_{t+1}}\Pr(\bm{\Lambda}_{t+1}|\bm{\Lambda}_t=\bm{\Lambda}, a_t=a)V_{\alpha}(\bm{\Lambda}_{t+1})\Bigg\}
\end{split}\nonumber\\
    &=\min_{a}\left\{ C(\bm{\Lambda}, a) + \alpha\E_{\bm{\Lambda}'}\left[V_{\alpha}(\bm{\Lambda}')|\bm{\Lambda},a\right]\right\}.
\end{align}
\fi
\end{lemma}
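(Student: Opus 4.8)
The plan is to verify the three assertions in turn, with the finiteness of $V_{\alpha}(\bm{\Lambda})$ being the only non-routine ingredient; the Bellman equation and the existence of a deterministic stationary optimal policy then follow from the standard theory of discounted MDPs with countable state space, finite action set, and nonnegative (but possibly unbounded) costs, in the spirit of~\cite{hsu2017}.

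\emph{Finiteness.} The key observation is that, under \emph{any} policy, the AoI of each source grows at most linearly: from \cref{eq:aoiprocess}, $\Delta_k(t)\le\Delta_k(0)+t$ for all $k$ and $t$. Since the last term in \cref{eq:costfun} is nonnegative, $C(\bm{\Lambda}_t,a_t)\le\frac{1}{K}\sum_{k}\Delta_k(t)+1\le A_{\bm{\Lambda}}+t$ for a constant $A_{\bm{\Lambda}}$ depending only on the initial state $\bm{\Lambda}$. Hence, for every policy $\pi$,
\begin{equation}
V_{\alpha,\pi}(\bm{\Lambda})\le\sum_{t=0}^{\infty}\alpha^{t}\bigl(A_{\bm{\Lambda}}+t\bigr)=\frac{A_{\bm{\Lambda}}}{1-\alpha}+\frac{\alpha}{(1-\alpha)^{2}}<\infty ,
\end{equation}
and taking the infimum over $\pi$ gives $0\le V_{\alpha}(\bm{\Lambda})<\infty$, the lower bound being immediate because all costs are positive.

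\emph{Bellman equation and stationary policy.} With finiteness in hand, I would run value iteration from $V^{0}\equiv 0$. Writing $T$ for the Bellman operator in \cref{eq:valueiter}, monotonicity of $T$ and $V^{1}=T0\ge 0=V^{0}$ give $V^{n}\uparrow$; moreover each $V^{n}$ is the optimal $n$-stage discounted cost, which does not exceed $V_{\alpha}$ since the discarded tail costs are nonnegative, so $V^{n}\le V_{\alpha}$. By the monotone convergence theorem $T$ commutes with the limit (the action set being finite), so $V^{\infty}:=\lim_{n}V^{n}$ satisfies $V^{\infty}=TV^{\infty}$, i.e. the Bellman equation; because $\{1,\dots,N\}$ is finite the minimum is attained, yielding a stationary deterministic policy $f$ with $V^{\infty}=C(\cdot,f)+\alpha P_{f}V^{\infty}$. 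Iterating this identity and using $P_{f}^{n}V^{\infty}\ge 0$ gives $V^{\infty}\ge V_{\alpha,f}\ge V_{\alpha}$, which together with $V^{\infty}\le V_{\alpha}$ shows $V^{\infty}=V_{\alpha}=V_{\alpha,f}$; thus $f$ is optimal and $V_{\alpha}$ solves the Bellman equation. Equivalently, one may simply invoke the corresponding discounted-cost results in~\cite{puterman94,bertsekas05_part1} (or the discounted part of the conditions in~\cite{sennott89}).

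\emph{Main obstacle.} The only real difficulty is the unboundedness of the cost function, which rules out the elementary sup-norm contraction argument available for bounded-cost discounted MDPs. The linear-growth bound above is precisely what circumvents this: it makes every discounted quantity finite and places the problem squarely within the nonnegative-cost (``negative dynamic programming'') framework, after which the monotone value-iteration argument and the policy-extraction step are entirely routine.
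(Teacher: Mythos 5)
Your proof is correct, but it follows a genuinely different route from the paper's. The paper establishes all three claims in one stroke by appealing to \cite[Proposition~5]{sennott89}: it suffices to exhibit a stationary policy inducing an irreducible, aperiodic, positive recurrent chain with finite expected first-passage cost to a reference state, and this is checked for the uniform random policy of \cref{sec:random_policy} via the chain built in \cref{app:randompolicyavgaoiproof}. You instead prove finiteness of $V_{\alpha}$ directly from the linear growth of the AoI ($\Delta_k(t)\le\Delta_k(0)+t$, hence $C(\bm{\Lambda}_t,a_t)\le A_{\bm{\Lambda}}+t$ and the discounted series converges under \emph{every} policy), and then derive the Bellman equation and a deterministic stationary minimizer by monotone value iteration, exploiting the nonnegative costs and the finite action set. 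Both arguments are sound. Your finiteness step is the more robust of the two: it needs no recurrence structure, and it quietly repairs a point the paper leaves implicit --- a finite expected passage time together with finite per-step costs does not by itself bound the expected \emph{total} passage cost when the per-step cost grows along the trajectory, so the paper's concluding sentence really rests on playing the linear cost growth off against an integrable (square of the) passage time. What the paper's route buys is economy and uniformity: the first-passage-cost machinery and the random-policy chain are reused verbatim in the proofs of \cref{pro:condition2,pro:condition3}, whereas your self-contained argument serves only this lemma; your closing remark that one could instead cite standard discounted-cost results is in effect what the paper does, just routed through \cite{sennott89} rather than \cite{puterman94}.
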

\begin{proof}
Let $C_{\bm{\Lambda},\bm{\Lambda}_0}$ denote the (random) cost of a first passage from $\bm{\Lambda}$ to $\bm{\Lambda}_0$ under some policy. It suffices to show that there exists a stationary policy that induces an irreducible, aperiodic Markov chain satisfying $\E[C_{\bm{\Lambda},\bm{\Lambda}_0}]<\infty$ for all $\bm{\Lambda}$~\cite[Proposition~5]{sennott89}. To show that this condition is satisfied, consider the stationary random policy introduced in \cref{sec:random_policy} that schedules a random sensor in each time slot. Because the induced Markov chains from \cref{app:randompolicyavgaoiproof} describing the cost of each source are irreducible, aperiodic and positive recurrent, the Markov chain describing the average cost of all $K$ sources (obtained by augmenting the state space) is also irreducible, aperiodic and positive recurrent. Let $\bm{\Lambda}_0=(\bm{S}_0, \bm{\Delta}_0)$, where $\bm{S}_0$ is an arbitrary set of source states and $\bm{\Delta}_0=\{1,2,\ldots,K\}$, then, due to the positive recurrence, the expected number of steps required to reach $\bm{\Lambda}_0$ from any state $\bm{\Lambda}$ is finite. Since $C(\bm{\Lambda}, a_t)<\infty$ for any $\bm{\Lambda}$ and $a_t$, we have that $\E[C_{\bm{\Lambda},\bm{\Lambda}_0}]<\infty$.
\end{proof}

\begin{lemma}\label{pro:nonnegative}
$V_{\alpha}(\bm{\Lambda})$ is nondecreasing in the AoI of each source, $\Delta_k$.
\end{lemma}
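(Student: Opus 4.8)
The plan is to establish the monotonicity along the value-iteration sequence and then pass to the limit. Recall from \cref{eq:valueiter} that $V_{\alpha}^{n+1}(\bm{\Lambda})=\min_{a}\{C(\bm{\Lambda},a)+\alpha\E_{\bm{\Lambda}'}[V_{\alpha}^{n}(\bm{\Lambda}')|\bm{\Lambda},a]\}$ with $V_{\alpha}^{0}\equiv 0$, and that $V_{\alpha}^{n}(\bm{\Lambda})\to V_{\alpha}(\bm{\Lambda})$ pointwise for every $0<\alpha<1$. Since a pointwise limit of functions that are each nondecreasing in a given coordinate $\Delta_k$ is again nondecreasing in $\Delta_k$, it suffices to prove by induction on $n$ that $V_{\alpha}^{n}$ is nondecreasing in each $\Delta_k$. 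The base case $V_{\alpha}^{0}\equiv 0$ is immediate. For the inductive step, fix a source index $k$ and two states $\bm{\Lambda}=(\bm{S},\bm{\Delta})$ and $\tilde{\bm{\Lambda}}=(\bm{S},\tilde{\bm{\Delta}})$ that share the same source states and satisfy $\tilde{\Delta}_j=\Delta_j$ for $j\ne k$ and $\tilde{\Delta}_k\ge\Delta_k$.

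First I would observe that the one-step cost is monotone: from \cref{eq:costfun}, substituting $u_n=\mathbbm{1}[a=n]$, one finds $C(\tilde{\bm{\Lambda}},a)-C(\bm{\Lambda},a)=\frac{1}{K}(\tilde{\Delta}_k-\Delta_k)\bigl(1-p_{ak}^{(s_k)}q_a\bigr)\ge 0$ for every action $a$, because $p_{ak}^{(s_k)}q_a\in[0,1]$. Next I would couple the one-step transitions out of $\bm{\Lambda}$ and $\tilde{\bm{\Lambda}}$ under a common action $a$: use the same realization of the source-state transitions (which in this model depend neither on the AoI vector nor on $a$) and the same realization of the measurement/erasure indicators $\zeta_{ak}(t)$ (whose law depends only on $\bm{S}$ and $a$). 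Under this coupling the successor source states coincide, $\Delta_j'=\tilde{\Delta}_j'$ for $j\ne k$, and for source $k$ either both AoIs reset to $1$ or both are incremented, so that $\tilde{\Delta}_k'\ge\Delta_k'$ on every sample path (the same holds under the truncation $\min(Q,\cdot)$, which is itself nondecreasing). Hence almost surely the coupled successor $\tilde{\bm{\Lambda}}'$ has the same source states as $\bm{\Lambda}'$ and dominates it in every AoI coordinate, so the induction hypothesis gives $V_{\alpha}^{n}(\tilde{\bm{\Lambda}}')\ge V_{\alpha}^{n}(\bm{\Lambda}')$ pathwise; taking expectations yields $\E_{\bm{\Lambda}'}[V_{\alpha}^{n}(\bm{\Lambda}')|\tilde{\bm{\Lambda}},a]\ge\E_{\bm{\Lambda}'}[V_{\alpha}^{n}(\bm{\Lambda}')|\bm{\Lambda},a]$.

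Combining the two inequalities, for each $a$ the bracketed term in \cref{eq:valueiter} at $\tilde{\bm{\Lambda}}$ is at least the corresponding term at $\bm{\Lambda}$; since this holds term by term, the minimum over $a$ preserves it, giving $V_{\alpha}^{n+1}(\tilde{\bm{\Lambda}})\ge V_{\alpha}^{n+1}(\bm{\Lambda})$ and closing the induction. Letting $n\to\infty$ gives $V_{\alpha}(\tilde{\bm{\Lambda}})\ge V_{\alpha}(\bm{\Lambda})$, which is the claim (finiteness of $V_{\alpha}$ being already guaranteed by \cref{pro:condition1}). I do not expect a genuine obstacle here; the one step that needs care is the coupling, i.e.\ making explicit that the source-state dynamics and the measurement/erasure events are independent of the current AoI values, so that a larger initial AoI is transported to a coordinatewise-larger successor on every sample path — this is exactly what allows the monotonicity to propagate through both the expectation and the $\min_a$.
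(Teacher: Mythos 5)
Your proof is correct and follows essentially the same route as the paper: induction along the value-iteration sequence starting from $V_{\alpha}^{0}\equiv 0$, using monotonicity of the one-step cost in $\Delta_k$ and preservation of monotonicity under expectation and $\min_a$, then passing to the pointwise limit. The only difference is that you make explicit, via a pathwise coupling of the source-state and observation/erasure randomness, why $\E_{\bm{\Lambda}'}[V_{\alpha}^{n}(\bm{\Lambda}')|\bm{\Lambda},a]$ is nondecreasing in $\Delta_k$ --- a step the paper's proof asserts only implicitly --- so your argument is, if anything, more complete.
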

\begin{proof}
We prove the result by induction using the value iteration procedure in \cref{eq:valueiter}, which is guaranteed to converge to the optimal value function for $0<\alpha<1$ from any initial $V_{\alpha}^{0}(\bm{\Lambda})$.
Note first that the cost function $C(\bm{\Lambda}, a)$ defined in \eqref{eq:costfun} is nondecreasing in $\Delta_k$, and by setting $V_{\alpha}^{0}(\bm{\Lambda})=0$ this also holds for the base case. For the induction step, note that the sum of nondecreasing functions is also nondecreasing, and the minimum operator preserves this property. Therefore, if  $V_{\alpha}^{n}(\bm{\Lambda})$ is nondecreasing in $\Delta_k$, then $V_{\alpha}^{n+1}(\bm{\Lambda})$ is also nondecreasing in $\Delta_k$. It follows that $V_{\alpha}(\bm{\Lambda})=\lim_{n\to\infty}V_{\alpha}^{n}(\bm{\Lambda})$ is nondecreasing in $\Delta_k$ as well.
\end{proof}

\begin{lemma}\label{pro:condition2}
There exists a finite, nonnegative $N$ such that $-N\le h_{\alpha}(\bm{\Lambda})$ for all $\bm{\Lambda}$ and $\alpha$.
\end{lemma}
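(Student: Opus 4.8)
The plan is to control $h_\alpha(\bm\Lambda)=V_\alpha(\bm\Lambda)-V_\alpha(\bm\Lambda_0)$ from below by comparing $V_\alpha$ only across states with identical AoIs. The naive estimate $h_\alpha(\bm\Lambda)\ge -V_\alpha(\bm\Lambda_0)$ (from $V_\alpha\ge0$) is useless because $V_\alpha(\bm\Lambda_0)\to\infty$ as $\alpha\to1$; the content of the lemma is that the \emph{difference} stays bounded. First I would invoke the monotonicity of \cref{pro:nonnegative}: writing $\bm\Delta_{\min}=(1,\dots,1)$ for the smallest admissible AoI vector, $V_\alpha(\bm S,\bm\Delta)\ge V_\alpha(\bm S,\bm\Delta_{\min})$, hence $V_\alpha(\bm\Lambda)\ge\min_{\bm S}V_\alpha(\bm S,\bm\Delta_{\min})$, the minimum being over the \emph{finite} set $\prod_k\{1,\dots,S_k\}$ of source configurations. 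Choosing the reference state as $\bm\Lambda_0=(\bm S_0,\bm\Delta_{\min})$ for some fixed $\bm S_0$, this gives
\begin{equation*}
  h_\alpha(\bm\Lambda)\;\ge\;-\max_{\bm S}\bigl(V_\alpha(\bm S_0,\bm\Delta_{\min})-V_\alpha(\bm S,\bm\Delta_{\min})\bigr),
\end{equation*}
so it suffices to bound, for each of the finitely many $\bm S$, the gap $V_\alpha(\bm S_0,\bm\Delta_{\min})-V_\alpha(\bm S,\bm\Delta_{\min})$ by a constant that does not depend on $\alpha$.

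To bound this gap I would fix $\bm S$, let $\pi^\star$ be an optimal policy for the $\alpha$-discounted problem started in $(\bm S,\bm\Delta_{\min})$, and construct an explicit (generally sub-optimal) policy $\sigma$ started in $(\bm S_0,\bm\Delta_{\min})$, comparing the two trajectories on a common probability space. Because the source-state Markov chain is finite, irreducible and aperiodic (the same property used in \cref{theo:randompolicyavgaoi}), its two copies can be coupled to coalesce at a random time $\tau$ with $\E[\tau]<\infty$ and $\E[\tau^2]<\infty$, with bounds depending on $(\bm S_0,\bm S)$ but not on $\alpha$. Up to time $\tau$, $\sigma$ plays an arbitrary fixed sensor, so every AoI of its process stays at most $1+\tau$; from time $\tau$ on, the two processes share the same source states, hence the same observation probabilities, and $\sigma$ replays $\pi^\star$'s actions while coupling the observation/erasure events, so that on each coordinate the AoI of the $\sigma$-process exceeds that of the $\pi^\star$-process by at most $\tau$, a gap that collapses to zero as soon as that source is reset. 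Since the per-slot cost in \eqref{eq:costfun} is affine and nondecreasing in the AoIs, the accumulated (discounted) discrepancy between the two trajectories is dominated, after discarding the discount factors, by a discount-free random cost with expectation controlled by $\E[\tau^2]$ and by the expected number of slots needed to reset every source once; this yields $V_\alpha(\bm S_0,\bm\Delta_{\min})-V_\alpha(\bm S,\bm\Delta_{\min})\le N_{\bm S}$ with $N_{\bm S}$ free of $\alpha$, and $N=\max_{\bm S}N_{\bm S}$ completes the proof.

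The main obstacle is exactly extracting a bound \emph{uniform} in $\alpha$: every crude estimate — e.g.\ $V_\alpha(\cdot)\le c_0/(1-\alpha)$ obtained by comparison with the random policy of \cref{theo:randompolicyavgaoi} — scales like $1/(1-\alpha)$ and diverges as $\alpha\to1$, so one must show that the post-coalescence discrepancy is genuinely \emph{transient} rather than merely finite for each $\alpha$. The clean way to organize this is to split the argument: for $\alpha$ in a compact subinterval of $(0,1)$ the trivial bound $V_\alpha(\bm\Lambda_0)\le c_0/(1-\alpha)$ is already uniformly finite, while for $\alpha$ close to $1$ the optimal policy $\pi^\star$ cannot afford to neglect any source — neglecting source $k$ forces a $\Theta\big((1-\alpha)^{-2}\big)$ term into $V_\alpha$, dominating the $\Theta\big((1-\alpha)^{-1}\big)$ cost of resetting it occasionally — so the reset times of $\pi^\star$ are well behaved and the coupling computation closes. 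A secondary subtlety is that the all-ones AoI state $\bm\Delta_{\min}$ need not itself be reachable (if no sensor observes all sources simultaneously), which is why the reduction in the first paragraph is carried out at the level of value functions via \cref{pro:nonnegative} rather than by literally steering the AoIs to $\bm\Delta_{\min}$.
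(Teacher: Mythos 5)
Your first reduction is exactly the paper's: by \cref{pro:nonnegative} only the finitely many states $(\bm{S},\bm{\Delta}_{\min})$ can have value below $V_\alpha(\bm{\Lambda}_0)$, so it suffices to bound $V_\alpha(\bm{\Lambda}_0)-V_\alpha(\bm{S},\bm{\Delta}_{\min})$ uniformly in $\alpha$ for each of the finitely many $\bm{S}$. From there, however, you take a different and ultimately incomplete route. Your coupling argument needs, after the two source chains coalesce at time $\tau$, a bound on $\sum_k \E[(\Delta_k^\sigma(\tau)-\Delta_k^{\pi^\star}(\tau))\, T_k]$, where $T_k$ is the time until $\pi^\star$ next resets source $k$; the discounted contribution of the persistent gap on coordinate $k$ is of order $\tau\cdot\min(T_k,(1-\alpha)^{-1})$. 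You correctly identify this as the crux, but the proposed resolution is only a heuristic: the claim that for $\alpha$ near $1$ the optimal policy ``cannot afford to neglect any source,'' together with the $\Theta((1-\alpha)^{-2})$ vs.\ $\Theta((1-\alpha)^{-1})$ comparison, does not yield a uniform-in-$\alpha$ bound on $\E[\tau T_k]$; one would need quantitative control of the reset times of an unknown, $\alpha$-dependent optimal policy started from the random state reached at the coalescence time. As written this step is a genuine gap, not a technicality.

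The paper closes the argument without ever examining what the optimal policy does with the AoIs, by exploiting the fact that the source-state component is autonomous. Starting from $(\bm{S},\bm{\Delta}_{\min})$ and following the optimal policy, the source chain hits $\bm{S}_0$ at a first-passage time with finite expected accumulated cost (the AoIs grow at most linearly, so the cost up to the hitting time is controlled by the finite second moment of that time), and at that instant the state is $(\bm{S}_0,\bm{\Delta}_\tau)$ with $\bm{\Delta}_\tau\ge\bm{\Delta}_{\min}$ componentwise; \cref{pro:nonnegative} then gives $V_\alpha(\bm{S}_0,\bm{\Delta}_\tau)\ge V_\alpha(\bm{\Lambda}_0)$ directly, and the first-passage decomposition of $V_\alpha(\bm{S},\bm{\Delta}_{\min})$ yields $V_\alpha(\bm{\Lambda}_0)-V_\alpha(\bm{S},\bm{\Delta}_{\min})\le \E[C_{(\bm{S},\bm{\Delta}_{\min}),(\bm{S}_0,\bm{\Delta}_\tau)}]$, a finite, $\alpha$-free constant. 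In other words, monotonicity is applied a second time, at the hitting time, precisely so that no comparison of post-coalescence trajectories (and hence no control of reset times) is ever needed. If you want to salvage your coupling, replace ``mimic $\pi^\star$ forever'' by ``stop the comparison at the first time the source state equals $\bm{S}_0$ and invoke monotonicity there'' --- which is the paper's proof.
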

\begin{proof}
We prove that the inequality holds by first showing that the number of states $\bm{\Lambda}$ with $V_{\alpha}(\bm{\Lambda})< V_{\alpha}(\bm{\Lambda}_0)$ is finite, and then showing that the value function of these states is bounded from below by a constant that is independent of $\alpha$.
Define $\bm{\Lambda}_0=(\bm{S}_0, \bm{\Delta}_0)$, where $\bm{S}_0$ is an arbitrary set of source states and $\bm{\Delta}_0=\{1,1,\ldots,1\}$. Because the value function $V_{\alpha}(\bm{\Lambda})$ is finite and nondecreasing in $\Delta_k$ (as established by \cref{pro:condition1,pro:nonnegative}), any finite $N\ge 0$ satisfies the condition for all states $\bm{\Lambda}=(\bm{S},\bm{\Lambda})$ with $\bm{\Lambda}\neq\bm{\Lambda}_0$. Furthermore, because the set of source states is finite, there are at most a finite number of states $(\bm{S},\bm{\Delta}_0)$
for which $V_{\alpha}((\bm{S},\bm{\Delta}_0))< V_{\alpha}(\bm{\Lambda}_0)$. Because $\bm{S}$ evolves according to an irreducible positive recurrent Markov chain (independent of the policy), any state $(\bm{S},\bm{\Delta}_0)$ will transition into $(\bm{S}_0,\bm{\Delta})$ for some AoI state $\bm{\Delta}$ within finite expected time at a finite expected total cost of $\E[C_{(\bm{S},\bm{\Delta}_0),(\bm{S}_0,\bm{\Delta})}]$. Because the value function is nondecreasing in $\Delta_k$, we have that $V_{\alpha}((\bm{S}_0,\bm{\Delta}))\ge V_{\alpha}(\bm{\Lambda}_0)$. From the definition of the value function, this implies that $V_{\alpha}(\bm{\Lambda}_0)\le V_{\alpha}((\bm{S},\bm{\Delta}_0))+\E[C_{(\bm{S},\bm{\Delta}_0),(\bm{S}_0,\bm{\Delta})}]$ and equivalently $-\E[C_{(\bm{S},\bm{\Delta}_0),(\bm{S}_0,\bm{\Delta})}]\le V_{\alpha}((\bm{S},\bm{\Delta}_0))-V_{\alpha}(\bm{\Lambda}_0)$. Consequently, by setting $N=\max_{\bm{S}}\E[C_{(\bm{S},\bm{\Delta}_0),(\bm{S}_0,\bm{\Delta}')}]$ where $\bm{\Delta}'=\min \{\bm{\Delta} | \max_{\bm{S}}\E[C_{(\bm{S},\bm{\Delta}_0),(\bm{S}_0,\bm{\Delta})}]<\infty\}$ the condition is satisfied.
\end{proof}

\begin{lemma}\label{pro:condition3}
There exist a nonnegative $M_{\bm{\Lambda}}$ such that $h_{\alpha}(\bm{\Lambda})\le M_{\bm{\Lambda}}$ for all $\bm{\Lambda}$ and $\alpha$. Furthermore, $\E\left[M_{\bm{\Lambda}_{t+1}}|\bm{\Lambda}_t,a_t\right]<\infty$ for all $\bm{\Lambda}_t$ and $a_t$.
\end{lemma}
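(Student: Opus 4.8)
The plan is to establish the upper bound by the standard first‑passage (``recovery policy'') argument, reusing the machinery already set up in the proof of \cref{pro:condition1}. Fix the reference state $\bm{\Lambda}_{0}=(\bm{S}_{0},\bm{\Delta}_{0})$ used to define the relative value function $h_{\alpha}(\bm{\Lambda})=V_{\alpha}(\bm{\Lambda})-V_{\alpha}(\bm{\Lambda}_{0})$, chosen (consistently with \cref{pro:condition1,pro:condition2}) to lie in the recurrent class of the Markov chain induced by the random sampling policy of \cref{sec:random_policy}, so that $\bm{\Lambda}_{0}$ is reached from every state in finite time almost surely. Recall from the proof of \cref{pro:condition1} that this chain is irreducible, aperiodic and positive recurrent, and that the expected cost $\E[C_{\bm{\Lambda},\bm{\Lambda}_{0}}]$ of a first passage from $\bm{\Lambda}$ to $\bm{\Lambda}_{0}$ under that policy is finite for every $\bm{\Lambda}$. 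I would simply take $M_{\bm{\Lambda}}=\E[C_{\bm{\Lambda},\bm{\Lambda}_{0}}]$, which is automatically nonnegative since all per‑step costs are positive.

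The first step is to show $h_{\alpha}(\bm{\Lambda})\le M_{\bm{\Lambda}}$ uniformly in $\alpha$. Consider the (history‑dependent) policy $\sigma_{\bm{\Lambda}}$ that, started in $\bm{\Lambda}$, follows the random sampling policy until the first (a.s. finite) time $\tau$ at which the state equals $\bm{\Lambda}_{0}$, and thereafter follows the $\alpha$‑discounted optimal stationary policy guaranteed by \cref{pro:condition1}. Since $V_{\alpha}$ is the infimum over all policies, and by the strong Markov property at $\tau$,
\begin{equation}
\begin{split}
V_{\alpha}(\bm{\Lambda})\le V_{\alpha,\sigma_{\bm{\Lambda}}}(\bm{\Lambda})={}&\E\!\left[\sum_{t=0}^{\tau-1}\alpha^{t}C(\bm{\Lambda}_{t},a_{t})\,\bigg|\,\bm{\Lambda}_{0}=\bm{\Lambda}\right]\\
&+\E\!\left[\alpha^{\tau}\right]V_{\alpha}(\bm{\Lambda}_{0}).
\end{split}
\end{equation}
Because $0<\alpha<1$ gives $\alpha^{t}\le 1$ and $\alpha^{\tau}\le 1$, and because $V_{\alpha}(\bm{\Lambda}_{0})\ge 0$, the right‑hand side is at most $\E[C_{\bm{\Lambda},\bm{\Lambda}_{0}}]+V_{\alpha}(\bm{\Lambda}_{0})$. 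Subtracting $V_{\alpha}(\bm{\Lambda}_{0})$ yields $h_{\alpha}(\bm{\Lambda})\le \E[C_{\bm{\Lambda},\bm{\Lambda}_{0}}]=M_{\bm{\Lambda}}$, a bound that does not depend on $\alpha$.

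The second step, the conditional finiteness $\E[M_{\bm{\Lambda}_{t+1}}\mid\bm{\Lambda}_{t},a_{t}]<\infty$, is essentially free. From a state $\bm{\Lambda}_{t}=(\bm{S}_{t},\bm{\Delta}_{t})$ the successor $\bm{\Lambda}_{t+1}=(\bm{S}_{t+1},\bm{\Delta}_{t+1})$ takes only finitely many values: $\bm{S}_{t+1}$ ranges over the finite set $\{1,\dots,S_{1}\}\times\cdots\times\{1,\dots,S_{K}\}$, and by \eqref{eq:aoiprocess} each component of $\bm{\Delta}_{t+1}$ equals either $1$ or $\Delta_{k}(t)+1$. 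Hence the conditional expectation is a finite sum of the finite quantities $M_{\bm{\Lambda}_{t+1}}$ and is therefore finite. (If an explicit envelope is preferred, a short coupling shows $M_{\bm{\Lambda}}$ is nondecreasing in each $\Delta_{k}$ — raising an initial AoI cannot lower any per‑step cost along a coupled first passage — so one may take $M_{\bm{\Lambda}_{t+1}}\le M_{(\bm{S}_{t+1},\bm{\Delta}_{t}+\mathbf{1})}<\infty$, paralleling the monotonicity already exploited in \cref{pro:nonnegative}.)

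The step requiring the most care is the finiteness of $M_{\bm{\Lambda}}=\E[C_{\bm{\Lambda},\bm{\Lambda}_{0}}]$ together with the bookkeeping around $\bm{\Lambda}_{0}$. Since the per‑step cost is only bounded by $\frac{1}{K}\sum_{k}\Delta_{k}+1$, which under the random policy grows at most linearly along the first‑passage path, a fully rigorous argument needs a finite \emph{second} moment of the first‑passage time $\tau$, not merely its finite mean. This follows from the explicit structure of the per‑source AoI chains in \cref{app:randompolicyavgaoiproof}: irreducibility and aperiodicity of each source‑state chain together with $\sum_{n,s}q_{n}p_{nk}^{(s)}>0$ guarantee a uniformly positive probability of resetting each source within a bounded number of slots, so $\tau$ has a geometric tail and hence all moments finite; bounding $C_{\bm{\Lambda},\bm{\Lambda}_{0}}\le \tau(\max_{k}\Delta_{k}+1)+\tfrac12\tau^{2}$ then gives $M_{\bm{\Lambda}}<\infty$. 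One must also keep the reference state $\bm{\Lambda}_{0}$ in the recurrent class and identical across \cref{pro:condition1,pro:condition2} and this lemma; with that fixed, combining the three lemmas with the remaining hypotheses of~\cite{sennott89} (verified as in~\cite{hsu2017}) completes the proof of \cref{theo:optstationarypolicy}.
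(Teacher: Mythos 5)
Your proof is correct, and it reaches the same envelope $M_{\bm{\Lambda}}=\E[C_{\bm{\Lambda},\bm{\Lambda}_0}]$ (expected first-passage cost under the random sampling policy) that underlies the paper's argument, but by a genuinely different route. The paper does not prove the bound directly: it verifies the hypotheses of \cite[Proposition~5]{sennott89} by constructing, for every state--action pair $(\bm{\Lambda}',a)$, a perturbed random policy $\pi_{\bm{\Lambda}',a}$ that deterministically takes action $a$ in $\bm{\Lambda}'$ and randomizes elsewhere, and then argues this still induces an irreducible, aperiodic, positive recurrent chain with finite expected first-passage cost. You instead unpack that proposition: the switch-at-$\tau$ construction giving $V_{\alpha}(\bm{\Lambda})\le\E[C_{\bm{\Lambda},\bm{\Lambda}_0}]+\E[\alpha^{\tau}]V_{\alpha}(\bm{\Lambda}_0)$ is exactly the mechanism hidden inside the citation, and your observation that the successor set of any $(\bm{\Lambda}_t,a_t)$ is finite (finitely many source-state vectors, and each AoI either resets to $1$ or increments) disposes of $\E[M_{\bm{\Lambda}_{t+1}}|\bm{\Lambda}_t,a_t]<\infty$ without needing the action-perturbed policies at all. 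Your version also buys something the paper glosses over: in both \cref{pro:condition1} and \cref{pro:condition3} the paper deduces $\E[C_{\bm{\Lambda},\bm{\Lambda}_0}]<\infty$ from finite expected hitting time plus finite per-step cost, which is not immediate when the per-step cost grows linearly along the excursion; your geometric-tail bound on $\tau$ (uniform positive probability of hitting $\bm{\Lambda}_0$ within a bounded window, via the finite irreducible aperiodic source chains and $\sum_{n,s}q_np_{nk}^{(s)}>0$) together with $C_{\bm{\Lambda},\bm{\Lambda}_0}\le\tau(\max_k\Delta_k+1)+\tfrac12\tau^2$ is the missing second-moment argument. Your remark that the reference state $\bm{\Lambda}_0$ must be reachable and used consistently is also well taken, since the paper uses $\bm{\Delta}_0=\{1,2,\ldots,K\}$ in \cref{pro:condition1} but $\bm{\Delta}_0=\{1,1,\ldots,1\}$ in \cref{pro:condition2}.
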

\begin{proof}
It is sufficient to show that for each state $\bm{\Lambda}$ and action $a$, there exists a stationary policy $\pi_{\bm{\Lambda}',a}$ that chooses action $a$ in state $\bm{\Lambda}'$ and induces an irreducible, aperiodic Markov chain that satisfies $\E[C_{\bm{\Lambda},\bm{\Lambda}_0}]<\infty$ for all $\bm{\Lambda}$~\cite[Proposition 5]{sennott89}. To show that this condition is satisfied, let $\pi_{\bm{\Lambda}',a}$ be the random scheduling policy presented in \cref{sec:random_policy} that schedules in each time slot a sensor chosen uniformly at random, except in state $\bm{\Lambda}'$, where it deterministically schedules a specific sensor indicated by the action $a$. Taking action $a$ in state $\bm{\Lambda}'$ comes at a finite cost, and, since the action does not influence the source states but only the AoI, leads to a new state from which all other states can still be reached within finite time. Therefore, the Markov chain induced by $\pi_{\bm{\Lambda}',a}$ is still irreducible, aperiodic and positive recurrent, and thus has a unique stationary distribution. Furthermore, it satisfies $C(\bm{\Lambda}, a_t)<\infty$ for any $\bm{\Lambda}$ and $a_t$, and by repeating the argument used in the proof of \cref{pro:condition1} we conclude that $\E[C_{\bm{\Lambda},\bm{\Lambda}_0}]<\infty$.
\end{proof}

In accordance with~\cite[Theorem]{sennott89}, the conditions presented in \cref{pro:condition1,pro:condition2,pro:condition3} are sufficient for \cref{theo:optstationarypolicy}.

\bibliographystyle{IEEEtran}

\bibliography{IEEEabrv,bibliography}

\end{document}